\renewenvironment{equation*}{\[}{\]\ignorespacesafterend}
\tikzstyle{block1} = [rectangle, draw, thick,fill=blue!10, text width=4.5em, text centered, rounded corners, minimum height=2em, minimum width = 5cm]
\tikzstyle{block2} = [rectangle, draw, thick,fill=blue!10, text width=2em, text centered, rounded corners, minimum height=2em]
\tikzstyle{line} = [draw, -latex']
\newcommand{\arvind}[1]{\todo[color=blue!10,size=\tiny]{#1}}
\newcommand{\note}[1]{\todo[color=red!10,size=\tiny]{#1}}
\newcommand{\dena}[1]{\todo[color=green!10,size=\tiny]{#1}}
\setlist[enumerate]{leftmargin=.5in}
\setlist[itemize]{leftmargin=.5in}
\crefname{hypothesis}{Hypothesis}{Hypotheses}
\newcommand{\mcA}{{\mathcal{A}}}
\newcommand{\mcK}{{\mathcal{K}}}
\newcommand{\mcP}{{\mathcal{P}}}
\newcommand{\mcM}{{\mathcal{M}}}
\newcommand{\Ff}{\mathcal{F}}
\newcommand{\Gg}{\mathcal{G}}
\newcommand{\PP}{\mathbb{P}}
\newcommand{\EE}{\mathbb{E}}
\newcommand{\RR}{\mathbb{R}}
\newcommand{\N}{\mathbb{N}}
\newcommand{\ikrep}{{(k)}}
\newcommand{\ovc}{\overrightarrow{C}}
\newcommand{\mfT}{{\mathfrak{T}}}
\newcommand{\mfN}{{\mathfrak{N}}}
\newcommand{\R}{{\mathds{R}}}
\renewcommand{\arvind}[1]{{}}
\newtheorem{assumption}{Assumption}
\DeclareFontFamily{U}{mathx}{\hyphenchar\font45}
\DeclareFontShape{U}{mathx}{m}{n}{
      <5> <6> <7> <8> <9> <10>
      <10.95> <12> <14.4> <17.28> <20.74> <24.88>
      mathx10
      }{}
\DeclareSymbolFont{mathx}{U}{mathx}{m}{n}
\DeclareMathSymbol{\bigtimes}{1}{mathx}{"91}
\title{Principal agent mean field games in Renewable Energy Certificate (REC) markets}
\author{Dena Firoozi\thanks{Department of Decision Sciences, HEC Montréal, Montreal, QC, Canada (\email{dena.firoozi@hec.ca}})
\and Arvind V Shrivats\thanks{Department of Operations Research \& Financial Engineering, Princeton University, Princeton, NJ, USA (\email{shrivats@princeton.edu}})
\and Sebastian Jaimungal\thanks{Department of Statistical Sciences, University of Toronto, Toronto, ON, Canada (\email{sebastian.jaimungal@utoronto.ca}}}
\begin{document}

\maketitle
\begin{abstract}

Principal agent games are a growing area of research which focuses on the optimal behaviour of a principal and an agent, with the former contracting work from the latter, in return for providing a monetary award. While this field canonically considers a single agent, the situation where multiple agents, or even an infinite amount of agents are contracted by a principal are growing in prominence and pose interesting and realistic problems. Here, agents form a Nash equilibrium among themselves, and a Stackelberg equilibrium between themselves as a collective and the principal. 
We apply this framework to the problem of implementing Renewable Energy Certificate (REC) markets, where the principal requires regulated firms (power generators) to pay a non-compliance penalty which is inversely proportional to the amount of RECs they have. RECs can be obtained by generating electricity from clean sources or purchasing on the market. The agents react to this penalty and optimize their behaviours to navigate the system at minimum cost. In the agents' model we incorporate market clearing as well as agent heterogeneity. For a given market design, we find the Nash equilibrium among agents using techniques from mean field games. We then use techniques from extended McKean-Vlasov control problems to solve the principal (regulators) problem, who aim to choose the penalty function in such a way that balances environmental and revenue impacts optimally. We find through these techniques that the optimal penalty function is linear in the agents' state, suggesting the optimal emissions regulation market is more akin to a tax or rebate, regardless of the principal's utility function.
\end{abstract}

\begin{keywords}
principal agent games, mean field games, emissions markets, market design, Nash equilibrium, Stackelberg equilibrium, FBSDE, stochastic control
\end{keywords}

\begin{AMS}
91A16, 49N80, 93E20
\end{AMS}

\textit{This is work in progress.}

\section{Introduction} \label{sec:intro}

Climate change is among the greatest issues facing humanity. According to \cite{wuebbles2017climate}, we are now experiencing the warmest era in modern human history. The pertinent question in climate science is not whether the planet is warming. Instead, it is how much it will continue to do so, and what the consequences will be. Global average sea levels have increased by 8-9 inches since 1880, and are expected to continue this growth, with pessimistic projections calling for an increase in sea levels on the order of feet, not inches, by 2100 (see \cite{noaa_clim}). This would have disastrous consequences world-wide, and it is just one of the many impacts that climate change may have. Irregular weather patterns, more extreme natural disasters, and climate migration are other possible consequences.

Governments around the world have agreed to attempt to curtail their emissions in an effort to avoid or mitigate the worst of the consequences of climate change. The most notable recent example of this commitment is the Paris Agreement, signed in 2015 (see \cite{agreement2015paris}),
in which the world's largest polluters agreed to lowering their future carbon dioxide (CO$_2$) emissions, which is one of the primary drivers of climate change. While this agreement was generally seen as a positive direction, it lacks binding mechanisms that force nations to curtail emissions. It is up to individual countries to decide how (and indeed, whether) they reduce emissions. 

This leads to the idea of environmental policy and regulation as a means for each country, or region within a country, to  achieve their climate goals. There are several common environmental policies that have been used around the globe to reduce emissions. These include carbon taxes and cap-and-trade (C\&T) markets. More recently, there has also been focus on Renewable Energy Certificate (REC) markets, and on specific types of these markets, such as Solar Renewable Energy Certificate (SREC) markets. C\&T, REC, and SREC markets all fall into the category of market-based emissions regulation policies.

While the primary focus regarding the study of these policies is justly on their efficacy in combating climate change, we consider a related but different issue. We take for granted that there are valid environmental reasons for national and regional governments to enact policies such as REC markets. What we aim to do is to build off the work studying how firms, individually, or collectively optimise. A non-exhaustive list of works which study this either in SREC or C\&T markets is \cite{shrivats2020mean, shrivats2020optimal, carmona_coulon_schwarz_2012, hitzemann2018equilibrium, carmona_fehr_hinz_2009}.

These works primarily focus on the behaviour of firms who are optimizing within these markets. We wish to consider an additional layer of complexity. Specifically, we consider how to design regulated markets in an optimal way knowing very well that firms will optimise behaviour against the regulation. Such an understanding is obviously useful to both regulated firms within these markets, and the regulators themselves.

To do so, we formulate the regulated agents' problem through a stochastic game which we solve for an arbitrary REC market design imposed by a regulator. The regulator's problem is then defined on top of this. As a result, we formulate the joint problem as a principal agent mean field game (PA-MFG). Specifically, it is a scenario where a principal (the regulator) sets out the rules of the market in order to achieve their own goals, and the agents (regulated firms) operate amongst one another in a mean field game to navigate the imposed market in an optimal manner. Our contributions are preliminary results of optimal SREC market design that arise under this model formulation. Our methodology combines different aspects of approaches from principal agent game literature and stochastic game literature.

A full literature review is contained in Section \ref{sec:lit_review}. Before then, it is worth explicitly positioning this work relative to the extant literature, discussing the main influences of this work, and where we differ. This is followed by a brief discussion of the mechanics of REC markets.

PA-MFGs are novel in general, and the application of PA-MFG theory to environmental markets (whether REC or C\&T) is heretofore unconsidered in the literature. The MFG side of this problem was considered in \cite{shrivats2020mean}. We directly build off the regulated agent's problem formulation in the MFG side of our PA-MFG problem, using a model that is heavily inspired from that work, but also features new additions to make it more realistic. In particular, we now allow for firms to choose between short term excess REC generation and long term capital improvements to increase their baseline REC generation capacity. 

However, the more complex and challenging aspect of this problem is the interaction between the principal's problem and the mean field game which governs the regulated agents, resulting in a problem of mean field type. This is studied in detail in \cite{elie2019tale}, whose approach we are highly influenced by. However, one key difference between our work and the scenario covered by \cite{elie2019tale} is that the stochastic game which comprises MFG part of our PA-MFG includes a market clearing condition which is not included in the setup of \cite{elie2019tale}. This complexity and the problem setup generally lends the principal's problem of mean field type more naturally to the extended probabilistic approach of \cite{acciaio2019extended}. In the absence of the market clearing condition, one could choose to use the analytic (PDE) approach discussed in \cite{elie2019tale}. Additionally, we use the strong formulation of our model (and therefore any potential solution) as opposed to the weak formulation as in \cite{elie2019tale} (and other PA works such as \cite{cvitanic2018dynamic}). Like \cite{shrivats2020mean}, we consider the single period REC market. 

\subsection{The Basics of REC Markets}

REC markets belong to the class of so-called market-based emissions regulation policies. The most well-known of the policies which fall under this umbrella are carbon C\&T markets.

In carbon C\&T markets, regulators impose a limit on the amount of  $\text{CO}_2$ that regulated firms can emit during a certain time period (referred to as a compliance period). They also distribute allowances (credits) to individual firms in the amount of this limit, each allowing for a unit of $\text{CO}_2$ emission, usually one tonne. Firms must offset each of their units of emissions with an allowance, or face a monetary penalty for each allowance they are lacking. These allowances are tradable assets with prices set by supply and demand, allowing firms who require more credits than what they were allocated to buy them, and firms who require less to sell them. In this way, C\&T markets aim to find an efficient way of allocating the costs of $\text{CO}_2$ abatement across the regulated firms.

In practice, these systems regulate multiple consecutive and disjoint compliance periods, which are linked together through mechanisms such as \textit{banking}, where unused allowances in period-$n$ can be carried over to period-$(n+1)$. Other linking mechanisms include \textit{borrowing} from future periods (where a firm may reduce its allotment of allowances in period-$(n+1)$ in order to use them in period-$n$) and \textit{withdrawal}, where non-compliance in period-$n$ reduces period-$(n+1)$ allowances by the amount of non-compliance (in addition to the monetary penalty previously mentioned).

A closely related alternative to C\&T markets are REC markets. A regulator sets a floor on the amount of energy generated from renewable sources for each firm (based on a percentage of their total energy generation), and provides certificates for each MWh of energy produced via these means. This is also known as a Renewable Portfolio Standard (RPS), and typically apply to private Load Serving Entities (LSEs), also known as electricity suppliers. To ensure compliance, each firm must surrender certificates totaling the floor at the end of each compliance period, with a monetary penalty paid for each lacking certificate. The certificates are traded assets, allowing regulated LSEs to make a choice about whether to produce electricity from renewable means themselves, or purchase the certificates on the market (or a mix of both). In either case, whether purchasing or producing clean, RECs induce the production of clean energy, as all purchased RECs must have been generated through clean generation means.


REC markets can be used to encourage growth of a particular type of renewable energy. This is accomplished through a mechanism known as a `carve-out', requiring a specific type of renewable to be used, such as nuclear, hydroelectric, or solar. The most common of these systems are Solar REC markets (SREC markets), which have been implemented in many areas of the northeastern United States. The implementation of one as part of an RPS creates an SREC market, as opposed to a REC market. Mathematically, there is little difference between REC markets generally and SREC markets in particular. We will use the terms almost interchangeably in this work. 

The similarities between carbon C\&T markets and REC markets are clear; however, there are also some notable differences. One key difference is that uncertainty in the former market is the supply of certificates (driven by generation), while in the latter, uncertainty lies in the demand for allowances (driven by an emissions process). REC markets typically allow banking but borrowing and withdrawal are not. Broadly speaking, REC markets can be considered the inverse of a C\&T system. 

\subsection{Literature Review} \label{sec:lit_review}

Next, we provide an overview of the extant literature in three categories: REC / C\&T markets, mean field games, and principal agent games.

\subsubsection{REC / C\&T Literature}
The existing literature on REC markets focuses heavily on the price formation of certificates. \cite{coulon_khazaei_powell_2015} proposes a stochastic model for economy-wide SREC generation, calibrate it to the New Jersey SREC market, and ultimately solves for the equilibrium SREC price. They further investigate the role of regulatory parameters and discuss potential takeaways for the efficient design of SREC markets. Other works, most notably \cite{amundsen2006price} and \cite{hustveit2017tradable}, discuss REC price volatility and analyze the price dynamics and the reasons for volatility. \cite{khazaei2017adapt}  go one step further, and propose an alternative SREC market design that stabilizes SREC prices. 
\cite{shrivats2020optimal} flips the typical focus of works in SREC markets, using an exogenous price process as an input to model optimal behaviour on the part of a regulated agent in an SREC market, conducting numerical simulation studies to understand and characterize the nature of the firm's optimal controls.

There is considerably more literature on carbon C\&T markets, which REC markets resemble. \cite{seifert_uhrig-homburg_wagner_2008} represent firm behaviour as the solution to an optimal control problem from the perspective of a central planner, in a single-period C\&T market, who aims to optimize total expected societal cost. The authors then characterize and solve for the carbon allowance price process. This is further extended to a multi-period C\&T market in \cite{hitzemann2018equilibrium}, which leads to the equilibrium carbon allowance price being expressed as a strip of European binary options written on economy-wide emissions. Agents' optimal strategies and properties of allowance prices are also studied by \cite{carmona2010market} and \cite{carmona_fehr_hinz_2009} via functional analysis arguments, within a single compliance period setup. Both works  make significant contributions through detailed quantitative analyses of potential shortcomings of these markets and their alternatives (in \cite{carmona2010market}) and of the certificate price and its properties (in \cite{carmona_fehr_hinz_2009}). In each of these works, the authors argue for the equivalence between the solution to the optimization problem for the central planner and an equilibrium solution whereby each individual agent optimizes their profit. 
There are also notable works on structural models for financial instruments in emissions markets, such as \cite{howison_schwarz_2012} and \cite{carmona_coulon_schwarz_2012}. Finally, the problem of dynamic allocation of carbon allowances in a C\&T market is considered in \cite{aid2021optimal}.

\subsubsection{Mean Field Games Literature}

Mean field games (MFGs) are sprung from the works of \cite{huang2007large,huang2006large}, and \cite{Lasry2006a, Lasry2006b, lasry2007mean}. Broadly speaking, the methodology provides approximate Nash equilibria in stochastic differential games with symmetric interactions and a large number of players. A given player is impacted by the presence and the behavior of others through the empirical distribution of their private states. Such games are generally intractable, however, the key insight of the MFG methodology  is that taking the number of players to infinity allows  the  empirical distribution of the states of agents to be replaced by a mean field distribution. This greatly simplifies the analysis  and allows for a Nash equilibrium to be found in the infinite player limit. Just as crucially, this infinite player game can be proven to provide an approximate Nash equilibrium for the finite player game.

Many extensions and generalizations of mean field games and their applications exist. Among them include the probabilistic approach to MFGs, MFGs with common noise, and the master equation approach, see \cite{carmona2018probabilistic, carmona2013probabilistic, MasterEq2019, BensoussanBook2013}. Recently, MFGs with price formation have grown in prominence, with \cite{fujii2020mean, feron2021price, gomes2020mean2, gomes2020mean} all as contemporary examples. 



MFGs have found numerous applications in engineering (\cite{aziz2016mean, KIZILKALE2019,Tembine2017}), economics (\cite{Gomes2014,Gomes2016}), and  in particular mathematical finance including optimal execution problems and portfolio trading (\cite{CarmonaLacker2015, Mojtaba2015, ThesisDena2019, FirooziISDG2017, FirooziPakniyatCainesCDC2017, Cardaliaguet2018,Lehalle2019, casgrain2018mean, Horst2018, David-Yuri2020}), systemic risk (\cite{CarmonaSysRisk2015,CarmonaSysRisk2018,JaimungalSysRisk2017}), commodities markets (\cite{aid2017coordination, Mouzouni2019,Sircar2017, brown2017oil, ludkovski2017mean}), and even cryptocurrencies (\cite{li2019mean}) -- just to name a few important contributions. 

Finally, MFGs have been applied to SREC markets through \cite{shrivats2020mean}, which we will build on heavily in this work and extend to a PA-MFG setup.

\subsubsection{Principal Agent Games Literature}

Principal agent games encompass a variety of situations that involve the interaction between two distinct entities. The first entity is called a principal, who proposes a contract to the second entity, an agent. This contract is generally to perform some actions on behalf of or in conjunction with the principal. These actions require effort (cost) on the agent's part, who is typically compensated by the principal for doing so. The principal may run into the issue of moral hazard, as they typically only observe the outcome of the task, as opposed to the effort. The goal for the principal is to design a contract for the agent, which maximises the principal's own utility, but is also palatable enough to the agent for them to accept it.

\cite{holmstrom1987aggregation} is the first seminal paper on principal agent problems in continuous time. The paper considers a principal and agent with constant absolute risk aversion, where the agent's effort influences the drift of the output process, but not volatility. Here, they show the optimal contract is linear. This work was extended in various ways (see \cite{schattler1993first}, \cite{hellwig2002discrete}, among many others). More recently, \cite{sannikov2008continuous}  investigated a class of infinite-horizon principal agent games where the the principal rewards the agent continuously. Critically, this work leveraged the dynamic nature of the agent’s value function stemming from the dynamic programming principle. This allowed the principal’s problem to be formulated as a tractable optimal control
problem, and was further made rigorous and expanded upon in \cite{cvitanic2018dynamic} and \cite{cvitanic2017moral}, using second order BSDEs.

These works focused on the case where a single principal contracted a single agent. Many real-world situations, however, have a principal influencing a number of agents. With large populations of agents, finding equlibria is generally intractable, which motivates the idea of applying mean field game theory to this principal agent problem. This is exactly the approach \cite{elie2019tale} 
and \cite{carmona2018finite} take, the former in a continuous state space and the latter in a discrete state space.

Here, we use similar ideas, but take a different approach to \cite{elie2019tale} and \cite{cvitanic2018dynamic}, and analyse the principal-many-agent problem in the context of regulated REC markets. In particular, we differ from these works by marrying their general approach with the extended probabilistic approach espoused in \cite{acciaio2019extended}. This helps us deal with one of our additional complexities, a market clearing condition in the MFG. We also differ from these works by operating within the strong formulation of the problem, as opposed to the weak formulation. 

\subsection{Outline}

We conclude this lengthy introduction with a brief outline of what is to come in this work. In \Cref{sec:model} we present the mathematical model we use to govern the REC system from both the agents' perspective and the principal's. In \Cref{sec:problem_solving}, we provide the best response of the agent, and attempt to find solutions for the principal's problem. We conclude in \Cref{sec:conclusion} with a discussion of where this work stands, and the next steps we plan on taking.
\section{Model} \label{sec:model}

We will discuss the single-period framework for REC markets, with the following rules. The market governs a compliance period from $[0,T]$, denoted as $\mfT$. A firm obtains RECs in each period, with their terminal RECs denoted by $X_T$. At time $T$ a firm must pay (or possibly is paid) $C$, which represents some generic penalty chosen by the regulator, that is known to the firms. In this work, we assume (in a slight abuse of notation) that $C = C(X_T)$. A common choice of $C$ is $P(R - X_T)_+$, where $P$, $R \in \RR$. $C$ can only be contingent on $X_T$, not on the control processes of the agents. $C$ is usually the same for all agents. We will pose our problem without assuming this is the case for increased flexibility. There are assumed to be no costs after time $T$.  \dena{in our prior work the requirement could be different from one company to another.}

Firms receive RECs through the generation of electricity via a particular energy source (depending on the market, this could be solar, nuclear, etc.). One REC typically corresponds to one MWh of electricity produced via the target energy source. A firm may also purchase or sell RECs on the market. After $T$, all firms forfeit any remaining RECs. $T$ can be thought of as `the end of the world' -- there are no costs associated with any time after this. 

Our goal is to extend the work of \cite{shrivats2020mean} by considering the goals of the regulator in a REC market, in addition to the firms. We use techniques from principal agent games accordingly, roughly following the blue-print laid out in \cite{elie2019tale}, though we will use the probabilistic approach of handling extended mean field type problems as in \cite{acciaio2019extended} as opposed to the analytic approach in \cite{elie2019tale}. As such, we will describe two inter-related problems, known as the agent's (firm's) problem, and the principal's (regulator's) problem. 

The principal and agents are in a leader-follower relationship. Therefore, we will present this report by discussing the agent's problem, the principal's problem, then solving the agent's problem given an arbitrary choice of $C$ from the principal, and finally, by finding the optimal choice of $C$ given the known best response of the agents' to an arbitrary $C$. In doing so, we encode the follower-leader dynamic of their relationship into the solution methodology of our problem. 

\subsection{The Agent's Problem} \label{sec:agent_problem}

We first describe the agent's problem in this generic REC market setup, with $N$ agents in total, each belonging to one of $K$ distinct sub-populations. Agents in each sub-population share the same model parameters which are different from those of agents in other sub-populations. 

We work on the filtered probability space $(\Omega, \mathcal{F}, (\Ff_t)_{t \in \mfT}, \PP)$. All processes 
are assumed to be $\Ff$-adapted
unless otherwise stated. The filtration is defined later in this section. 

We denote the set of agent types by $\mcK := \{1, ..., K\}$,  the subset of agents belonging to sub-population $k\in\mcK$ by $\mfN_k$, and the set including all agents by $\mfN$. The notation and setup here are a modification of \cite{shrivats2020mean}. 
Here, the principal is viewed as exogenously specifying a penalty function.
Later, we will consider the choice the principal faces in choosing the penalty, knowing the agents' best response to an arbitrary penalty function it imposes.

Agents seek to modulate their planned excess REC generation $(g_t^i)_{t \in \mfT}$, trading $(\Gamma_t^i)_{t \in \mfT}$, and capacity expansion $(\alpha_t^i)_{t \in \mfT}$ behaviour in order to navigate the REC market in a profit-maximizing way. Planned excess REC generation rate and capacity expansion are restricted to be positive, while trading rate may be positive or negative. We denote the collection of excess generation rates, trading rates, and capacity expansion rates by $\bm{g_t} \coloneqq (g_t^1, \cdots, ,g_t^N)$, $\bm{\Gamma_t} \coloneqq (\Gamma_t^1, \cdots, \Gamma_t^N)$, and $\bm{\alpha_t} \coloneqq (\alpha_t^1, \cdots, \alpha_t^N)$ respectively.

In an arbitrary time period $[t_1, t_2]$, the firm aims to generate $G^{i}_{t_1,t_2}:=\int_{t_1}^{t_2} (h_t^k + A_t^i + g_t^i) \,dt$, where $h_t^k$ is a deterministic baseline generation level (RECS/year), and $A_t^i$ is the additional REC generating capacity above this that the firm has built through its choice of $\alpha$ ($A_0^i = 0$). The firm in fact generates $G^{i, (r)}_{t_1,t_2} := \int_{t_1}^{t_2} (h_t^k + A_t^i + g_t^i) dt + \int_{t_1}^{t_2} \sigma^k \,dW_t^i$, where $\sigma^k$ is a deterministic function of time. The diffusive term $\sigma^k dW_t^{i}$ may be interpreted as the generation rate uncertainty at $t$, reflecting the inherent uncertainty that exists in renewable energy generation. Methods similar to \cite{coulon_khazaei_powell_2015} may be used to estimate $h_t^k$ and $\sigma^k$. We assume that $0 \leq h_t^k < \infty$ for all $t$. 

Excess REC generation ($g_t^i$) can be considered a short-term transient increase in generation rate which carries with it some cost when non-zero, but does not impact long-run REC generation. For example, a firm may choose to rent solar panels from another generator in order to increase their REC generation, and would be forced to pay to do so.

Meanwhile, $A_t^i$ represents the long-term increases the firm has made to its REC generation capacity through their choice of $\alpha_t^i$, effectively increasing their baseline generation level and the rate at which the firm can generate RECs for 0 additional cost. For example, a firm can choose to invest in new solar panels or build a new solar farm, which increases the amount of RECs it can generate both now and in the future. In this work, we assume there is no delay between investing in capacity expansion and an impact to the REC producing capability of the firm. This would be a straightforward addition to the model, though adding a stochastic delay would be considerably harder.

Therefore, the firm's state variables include its controlled REC process (for a firm $i$ in sub-population $k$) and its capacity, which are described by the following SDEs:
\begin{align}
    dX_t^{i,[N]} &= (h_t^k +g_t^i + \Gamma_t^i + A_t^i)dt + \sigma^k dW_t^i, \\
    dA_t^{i, [N]} &= \alpha_t^i dt, \qquad i\in \mfN_k,
\end{align}
where $W = \{W^i = (W_t^i)_{t \in \mathfrak{T}}, i \in \mathfrak{N}\}$ is a set of N independent standard Brownian motions,
and $W^i$ is progressively measurable with respect to
the filtration $\mathcal{F}^W \coloneqq (\mathcal{F}_t^W)_{t\in \mfT} \subset \mathcal{F}$ generated by $W$. In a slight abuse of notation, we denote $\bm{X_t^{i, [N]}} := (X_t^{i, [N]}, A_t^{i, [N]})^T \in \RR^2$ as the vector of states for agent $i$.

In this work, firms face costs for generating RECs, costs due to trading frictions, expanding generation capacity, and costs (or potentially profits) from trading RECs. Each firm is endowed with an initial inventory of RECs, given by the collection of random variables $\{\xi^i\}_{i = 1}^N$.
The empirical distribution of REC inventory is defined as
\begin{equation}
    \mu^{[N]}_t(dx) = \frac{1}{N} \sum_{i=1}^N \delta_{X_t^{i, [N]}}(dx), 
\end{equation}
where $\delta_y(\cdot)$ denotes the Dirac measure with unit point mass at $y$. 

Similarly, the empirical distribution of REC generation capacity is defined as

\begin{equation}
    \nu^{[N]}_t(dx) = \frac{1}{N} \sum_{i=1}^N \delta_{A_t^{i, [N]}}(dx). 
\end{equation}

\begin{assumption} \label{IntialStateAss}
The initial states $\{\xi^i\}_{i = 1}^N$ 
are identically distributed, mutually independent, and independent of $\mathcal{F}^W$. Moreover, $\sup_{i} \mathbb{E}[\Vert \xi^i\Vert^2] \leq c < \infty $, $i \in\mfN$, with $c$ independent of $N$. For notational convenience, we say that $\xi^i \sim \mu_0^{(k)}$, for all $i \in \mfN_k$, for all $k \in \mcK$. The initial distribution of generation capacity is assumed to be a degenerate distribution with a point mass at $0$. That is, $A_0^i = 0$ for all $i \in \mfN$.  
\end{assumption} 

Let $\theta_t^{[N]} := \mu_t^{[N]} \times \nu_t^{[N]}$ represent the product measure, and thus the empirical bivariate distribution of the agents' states. 

The REC price is modeled endogenously; that is, its value is  derived from the supply and demand of RECs. We denote the REC price by $S^{\theta^{[N]}}_t = (S_t^{\theta^{[N]}})_{t\in\mfT}$, which emphasizes the dependence of the REC price on the controlled states of the agents. Indeed,  agents interact with one another through the REC price. As such, we must solve for the agents' optimal strategies simultaneously.

We do not make explicit assumptions about the dynamics of the REC price process, instead defining it as the price which satisfies the clearing condition:
\begin{equation}
\frac{1}{N}{\sum_{i \in \mathfrak{N}} \Gamma_t^{i, \star}} = \frac{1}{N}{\sum_{k\in\mcK} \sum_{i\in\mfN_k} \Gamma_t^{i, \star}} = 0, \label{eq:clearing_condition}
\end{equation}
$ t-a.e.,\, \PP-a.s.$, for all $t \in \mfT$.

Each agent seeks to navigate the REC market with its given rules at minimum cost. Specifically, agents attempt to minimize the following cost functional\dena{Shouldn't it be $(g_u^i-h_u)^2$? Conditioned on $\mathbf{X_0^i}$?} 
\begin{equation}
    J^{a, i}(\bm{g}, \bm{\Gamma}, \bm{\alpha}) = \EE\left[\int_0^T \left(\tfrac{\zeta^k}{2}(g_u^i)^2 + \tfrac{\gamma^k}{2}(\Gamma_u^i)^2 + \tfrac{\beta^k}{2}(\alpha_u^i)^2 +  S_u^{\theta^{[N]}} \Gamma_u^i \right)du + C^k(X_T^{i, [N]})\rvert X_0^i\right].\label{eq:agent_cost}
\end{equation}
This holds for all $i \in \mfN_k$ and for all $k \in \mcK$. This cost functional has a similar structure as in \cite{shrivats2020mean}, with the agents' objective comprising of five distinct terms. 

The first term corresponds to costs associated with short-term deviation from baseline REC generation. Specifically, the agent incurs the cost $\tfrac{\zeta^k}{2}(g_u^i)^2$ 
per unit time for its level of increase from baseline REC generation. This choice of quadratic generation cost is convenient, as it is both differentiable and convex, both of which are desirable properties for our analysis. This cost is best interpreted as a firm renting capacity to increase their planned REC generation rate, as opposed to an investment cost where they expand their long-run REC generation capacity. 


The second term corresponds to a trading speed penalty, where the firm incurs a quadratic penalty in the amount traded, per unit time. This is incorporated in order to induce a constraint on their trading speed.  This cost introduces  a key difference between our model and the extant literature in the C\&T world. As mentioned in Section \ref{sec:intro}, prior works in this field do not incorporate trading frictions into their model (see \cite{hitzemann2018equilibrium}, \cite{seifert_uhrig-homburg_wagner_2008}, \cite{carmona_fehr_hinz_2009}, and \cite{carmona2010market}). 

The third term corresponds to an investment cost, which is quadratic in $\alpha$. $\alpha$ represents an increase in REC generation per year squared {\color{black} due to capacity expansion}. Put another way, it is the acceleration of total RECs {\color{black} generated due to capacity expansion}. This cost is best interpreted as a long-run improvement to REC generation capacity.

The fourth term corresponds to the cost (revenue) generated when purchasing (selling) a REC on the market, with the firm paying (receiving) the equilibrium REC price $S_t^{\theta^{[N]}}$. 

The fifth and final term corresponds to the non-compliance penalty the agent pays to the principal at the end of the compliance period. In practice, this is generally the same function for all sub-populations. However, we relax this assumption. A common choice for this is a `hockey-stick' function that imposes a linear penalty on the agents if they fail to supply a certain number of RECs (denoted $R$), and is flat afterwards, i.e., $C(X_T^{i, [N]}) = P\,(R - X_T^{i, [N]})_+$. In our model, however, the function $C^k$ is the principal's control. Agents take it as given and exogenous, thus for their problem, we keep $C^k(\cdot)$ generic. 


We next denote the filtration $\Gg^i= (\Gg_t^i)_{t \in \mathfrak{T}}$ that an agent adapts their strategy to and is defined as 
\begin{equation}\label{individualFilt}
    \Gg_t^i := \sigma\left((\bm{X_u^{i, [N]}})_{u \in [0, t]}\right)  \vee \sigma\left(\left(S_u^{\theta^{[N]}}\right)_{u \in [0, t]}\right).%
\end{equation}
This is the $\sigma$-algebra generated by the $i$-th firm's state variables and the REC price path. Note we assume that all firms have knowledge of the initial distribution (but not the actual value) of other firms' RECs and capacity 
$\Ff=(\Ff_t)_{t\ge0}$ is generated by the sequence of $\sigma$-algebras $\Ff_t=\bigvee_{i\in\mfN} \Gg_t^i$. 

The set of square integrable controls is defined as
\begin{equation}
    \mathbb{H}_t^2 := \left\{\left. (g, \Gamma, \alpha) : \Omega \times \mathfrak{T} \rightarrow \R^3 \; \right|\; \EE\left[{\textstyle\int_0^T} \left( (g_t)^2 + (\Gamma_t)^2 + (\alpha_t)^2 \right) dt\right] < \infty\right\}.
\end{equation}

\begin{assumption} \label{ass: MinorContrAction} The set of admissible controls for firm $i\in\mfN$  is 
\begin{align}
    \mcA^i := \left\{ (g, \Gamma, \alpha) \in \mathbb{H}_t^2\,\, \text{s.t.} \,\, g_t \geq 0\,, \alpha_t \geq 0\, \text{ for all } \, t \in \mathfrak{T}\,\, \text{and} \,\, (g, \Gamma) \text { $\Gg^i$-adapted}\,\, \right\}. \label{firmsAdmissibleCntrl}
\end{align}
\end{assumption}
This is a closed and convex set. As an individual firm cannot observe another firms' REC inventories, the restriction in the set above is to decentralize $\Gg^i$-adapted strategies.

As each agent is impacted by the distribution of states across all agents, we must solve for their optimal strategies collectively. Therefore, the agent's problem can be expressed as aiming to find the set of strategies that form a Nash equilibrium. That is, we search for a collection of controls 
$\{(g^{i,\star}, \Gamma^{i,\star}, \alpha^{i, \star}) \in \mcA^i \}_{i\in\N}$ such that 

\begin{equation}
(g^{i,\star}, \Gamma^{i,\star}, \alpha^{i, \star}) = \underset{(g, \Gamma, \alpha) \in \mcA^i}{\arg \inf} \, J^{a,i}(g^i, \Gamma^i, \alpha^i, \bm{g}^{-i, \star}, \bm{\Gamma}^{-i, \star}, \bm{\alpha}^{-i, \star}), \hspace{5mm} \forall i \in \mfN_k, \, \forall k \in \mcK, \label{eq:problem_statement}
\end{equation}
where \dena{$\alpha$ should be included.}\note{addressed} $(g^i, \Gamma^i, \alpha^i, \bm{g}^{-i, \star}, \bm{\Gamma}^{-i, \star}, \bm{\alpha}^{-i, \star})$ denotes the set of strategies $(\bm{g}^\star, \bm{\Gamma}^\star, \bm{\alpha}^\star)$ with $(g^{i, \star}, \Gamma^{i, \star}, \alpha^{i, \star})$  replaced by $(g^i, \Gamma^i, \alpha^i)$. This means that an individual agent cannot benefit by unilaterally deviating from the Nash strategy. 

In the finite-agent case, this is intractable due to the high dimensionality of the problem. 
Instead, we consider the mean field limit of this problem, with $N \rightarrow \infty$. In accordance with this shift in the problem, we make a few assumptions and introduce some new notation.

\textcolor{black}{We maintain our notation of indexing by firms, even in the infinite limit case. This allows us to be precise about what we are referring to, with no loss of clarity, as we will explicitly point out the sub-population each agent belongs to. Where appropriate, however, we use a superscript ${\cdot}^\ikrep$ rather than a superscript ${\cdot}^i$ to denote quantities relating to a representative (unspecified index) agent belonging to sub-population $k$. Parameters such as $\zeta, \gamma, \beta, h$ retain their superscript $k$ notation without parentheses, as they are real numbers and not a representative process.}

Accordingly, we denote the $i$-th agent's REC inventory in the infinite-population limit by $X^i = (X_t^i)_{t \in \mfT}$, and their generation capacity by $(A_t^i)_{t \in \mfT}$. Consistent with the previous paragraph, we often additionally consider the inventory of a representative agent from sub-population $k$, whose REC inventory and generation capacity we denote by $X_t^\ikrep$ and $A_t^\ikrep$ respectively, for convenience. This notation will primarily arise when we consider the principal's problem. 

The dynamics for the $i$-th agent's (belonging to sub-population $k$) REC inventory and REC generation capacity level are now
\begin{align}
dX_t^i &= (h_t^k + g_t^i + \Gamma_t^i + A_t^i)\, dt + \sigma_t^k \,dW_t^i, \label{eq:state_dynamicsMF}\\
dA_t^i &= \alpha_t^i dt \hspace{5mm}\text{ $\forall i \in \mfN_k$}\,, \label{eq:capacity_dynamicsMF}
\end{align}
where $W = \{W^i = (W_t^i)_{t \in \mathfrak{T}}, i\in \mathfrak{N}\}$ is a set of independent standard Brownian motions. Once again, each firm has an initial inventory of  RECs, given by the collection of random variables $\{\xi^i\}_{i = 1}^\infty$, and $A_0^i = 0$ for all $i \in \mfN$. Abusing notation slightly, we let $\bm{X_t^i} \coloneqq (X_t^i, A_t^i)^T$.

In this limit, individual agents no longer have a direct impact on one another. Instead, they interact with the mean field distribution of states, which makes the problem significantly simpler to solve. 

\begin{assumption}\label{ass:proportion}
The proportion of the total population of agents belonging to each sub-population $k\in\mcK$ converges to a constant as the number of firms ($N$) increases. That is,
\begin{equation}
\lim_{N \rightarrow \infty} \tfrac{N_k}{N} = \pi_k \in (0, 1) \quad  \PP.a.s., \; \forall k\in\mcK \label{eq:proportion}
\end{equation} 
\end{assumption}

We now define the notation used for the distribution of $X_t$ across agents.

\begin{definition} [Mean Field Distribution of REC Inventory] \label{def:mean_field_distribution_X} 
In the infinite-player setting, we denote the mean field distribution of REC inventory for agents in sub-population $ k \in \mcK$ by $\mu_t^{(k)}$. Specifically, we introduce the flow of measures
\begin{equation}
    \mu^{(k)} = (\mu_t^{(k)})_{t \in\mfT},
    \qquad \mu_t^{(k)} \in \mcP(\R), \;\forall t\in\mfT
\end{equation}
for all $k \in \mcK$,  where $\mathcal{P}(\R)$ represents the space of probability measures on $\R$, such that $\mu_t^{(k)}(A)$ is the probability that a representative agent from 
sub-population $k$ has an REC inventory belonging to the set $A \in \mathcal{B}(\R)$, at time $t$. 
Furthermore, we define
\begin{equation}
    \bm{\mu} = (\{\mu_t^{(k)}\}_{k \in \mcK})_{t \in \mfT}
\end{equation}
to be the flow of the collection of all mean field measures. 
\end{definition}

Similarly, we define the notation used for the distribution of $A_t$ across agents. 

\begin{definition} [Mean Field Distribution of REC Generation Capacity] \label{def:mean_field_distribution_A} 
In the infinite-player setting, we denote the mean field distribution of REC generation capacity for agents in sub-population $ k \in \mcK$ by $\nu_t^{(k)}$. Specifically, we introduce the flow of measures
\begin{equation}
    \nu^{(k)} = (\nu_t^{(k)})_{t \in\mfT},
    \qquad \nu_t^{(k)} \in \mcP(\R), \;\forall t\in\mfT
\end{equation}
for all $k \in \mcK$,  where $\mathcal{P}(\R)$ represents the space of probability measures on $\R$, such that $\nu_t^{(k)}(A)$ is the probability that a representative agent from 
sub-population $k$ has an REC generation capacity belonging to the set $A \in \mathcal{B}(\R)$, at time $t$. 
Furthermore, we define
\begin{equation}
    \bm{\nu} = (\{\nu_t^{(k)}\}_{k \in \mcK})_{t \in \mfT}
\end{equation}
to be the flow of the collection of all mean field measures. 
\end{definition}

Finally, we define the mean field distribution of states.

\begin{definition} [Mean Field Distribution of States] \label{def:mean_field_distribution} 
In the infinite-player setting, we denote the mean field distribution of the agents' states in sub-population $ k \in \mcK$ by $\theta_t^{(k)}:= \mu_t^{(k)} \times \nu_t^{(k)}$. Furthermore, we define
\begin{equation}
    \bm{\theta} := \bm{\mu} \times \bm{\nu}.
\end{equation}
\end{definition}

Observe that in \eqref{eq:agent_cost}, only the REC price is dependent on the actions of other agents. 
Accordingly, in the infinite population limit, we denote the REC price as $S_t^{\bm{\theta}}$. As all individuals are minor-agents, any single agent does not impact the price, however, the price is impacted by the mean field distribution. We assume that it is impacted only by the mean field distribution and not the mean field of controls (which would render this problem an extended MFG problem). In \cite{shrivats2020mean}, this assumption is validated (see (4.35) in the cited work). Therefore, the dependence of any agent on the mean field distribution of states arises through their dependence on the REC price process alone.

In the MFG, we modify the clearing condition to endogenously define the REC price to be
\begin{equation}
    \lim_{N \rightarrow \infty} \tfrac{1}{N}\sum_{i\in\mfN}\Gamma_t^i = 0,
\end{equation}
$ t-a.e.,\, \PP-a.s.$, for all $t \in \mfT$. 
This is analogous to the average trading rate (across agents)  vanishing at all times in the limit.

We now discuss the mechanics of the principal's choice of the penalty function. While we are still dealing with the collective problem of the agents, there must be some restrictions on the what the principal may do in order for the agents' problem to be well-posed and a logical translation of the real-world problem we are aiming to solve. 

Consider a generic agent in sub-population $k$ with inventory $X_T^\ikrep$. We observe that $C^k(X_T^{\ikrep})$ 
is a random variable, which will have a different distribution for agents in different sub-populations. We also note that the principal can choose a different non-compliance penalty for each sub-population, if they wish. To make this clear, we define
\begin{equation}
    \bm{C} := (C^k)_{k \in \mcK} := (C^k(X_T^{\ikrep}))_{k \in\mcK} \in \RR^{K}.
\end{equation}
The principal chooses over the random vector $\bm{C}$, which is equivalent to setting the $K$ functions $\ovc := (C^1, \cdots C^K)$. These are the contracts the principal enters into with the agents.

We require that the principal chooses $\bm{C}$ such that a solution to the MFG faced by the agents has a solution. Sufficient conditions for this is for $C^k$ to be non-increasing in $X_T^k$, convex, and once continuously \note{added ctsly diff here} differentiable everywhere. Note that by Alexandrov's theorem, this implies twice differentiability almost everywhere.  \textcolor{black}{We further assume the first derivative is bounded and Lipschitz, which tells us that the second derivative is bounded almost everywhere}.\note{Ensure this condition is required} Convexity of $C^k$ maintains the convexity of the optimization problem, and differentiability ensures our cost functional is G\^ateaux differentiable everywhere. While the classical `hockey stick' function previously described does not fit into this umbrella of functions, we can easily regularize it such that it does, and we could recover the regularized version as the optimal choice of $\bm{C}$ for the principal.

Subsequently, in the infinite-population limit, we denote firm-$i$'s cost functional  by $\overline{J^{i}}$. Agent $i$ thus attempts to minimize the following cost functional 
\begin{equation}
    \overline{J}^{a, i}(g^i, \Gamma^i, \alpha^i; \bm{\theta}) = \EE\left[\int_0^T \left(\tfrac{\zeta^k}{2}(g_u^i)^2 + \tfrac{\gamma^k}{2}(\Gamma_u^i)^2 + \tfrac{\beta^k}{2}(\alpha_u^i)^2 + S_u^{\bm{\theta}} \Gamma_u^i \right)du + C^k(X_T^i)\rvert \Gg^i_0 \right],
    \; i\in\mfN_k.
    \label{eq:rec_pa_theory_agent_pc}
\end{equation}

In the infinite-player game, agent-$i$ adapts their strategy to the filtration $\Gg^i= (\Gg_t^i)_{t \in \mathfrak{T}}$, where 
\begin{equation}\label{eq:individualFilt_infpop}
    \Gg_t^i := \sigma\left((\bm{X_u^i})_{u \in [0, t]}\right) \vee \sigma\left(\left(S_u^{\bm{\theta}}\right)_{u \in [0, t]}\right)\,. 
\end{equation}
This is the infinite-player analogue of \eqref{individualFilt}. The admissible set $\mcA^i$ retains its definition in \eqref{firmsAdmissibleCntrl} but with the filtration $\Gg$ above.

We seek a Nash equilibrium for this problem. Specifically, we search for a collection of controls 
$\{(g^{i,\star}, \Gamma^{i,\star}, \alpha^{i, \star}) \in \mcA^i \}_{i\in\N}$ such that
\begin{equation}
(g^{i,\star}, \Gamma^{i,\star}, \alpha^{i, \star}) = \underset{(g, \Gamma) \in \mcA^i}{\arg \inf} \, \overline{J}^{a,i}(g,\Gamma, \alpha; \bm{\mu}), \hspace{5mm} \forall i \in \N\,.  \label{eq:agent_problem_statementMF}
\end{equation}

The solution to \eqref{eq:agent_problem_statementMF} is a fixed point problem across the space of probability measures, combined with a standard stochastic control problem. Recall that in the MFG framework,  $\bm{\theta}$ is both an \textit{input} and an \textit{output} to the problem. The agents' cost functional depends on it (through the REC price), but it must also coincide with distribution of the states through \eqref{eq:state_dynamicsMF}, which are an output of the problem, determined by the optimal controls. This results in a fixed-point problem on the space of measure flows, as is standard in MFG problems.

To elaborate, we initially treat $\bm{\theta}$ as an exogenous input to the problem. For this $\bm{\theta}$ held fixed, \eqref{eq:agent_problem_statementMF} is a standard stochastic control problem. We solve this control problem and ensure that the optimal controls result in a controlled state with a distribution that coincides exactly with $\bm{\theta}$. The vast majority of this aspect of the work is taken care of in \cite{shrivats2020mean}; here, we invoke the machinery developed there and then move on to the principal agent version of the problem.

Additionally, for a given $C^k$, let \dena{Using $A$ as superscript could be confusing as the same notation has been used for capacity expansion.\\ The set over which we are infimizing can be specified. }\note{Agreed; we can discuss this for better notations; changed upper case A to lower case} $V^{a, (k)}(C^k) = \inf \overline{J}^{a, \cdot}(g, \Gamma, \alpha;\bm{\theta})$
, defined for all $k \in \mcK$. This is the optimally controlled value function for a representative agent in sub-population $k$. This exists for all $k \in \mcK$.  Additionally, define 
$\mcM(\ovc):=\{g^{i,\star},\Gamma^{i,\star}, \alpha^{i, \star}\}_{i\in \mfN_k, k \in \mcK}$ 
to be the collective optimal response across all agents to contracts $\ovc$. That is, $\mcM(\ovc)$ is the set of controls that satisfy \eqref{eq:agent_problem_statementMF}, along with the consistency condition that results in a solution to the MFG.

\subsection{The Principal's Problem} \label{sec:principal_problem}

The principal is the other side of the REC market.
They are the leader who sets the choice of the penalty functions $\ovc$, after which the agents respond with their optimal controls. We restrict ourselves to the penalty functions $\ovc$ such that solutions to \eqref{eq:agent_problem_statementMF} exist. That is,  we want choose $\ovc$ such that $\mcM(\ovc) \neq \emptyset$.

We prevent the principal from having unilateral power in choosing the contract, as that is both mathematically uninteresting and practically unrealistic. In practice, regulated LSEs lobby politicians and legislative bodies to ensure their needs and viability as a business are not ignored in the design of the REC markets. This feature is imposed by  a reservation cost $R_0$, which corresponds to the maximum expected cost across agents within a sub-population that the principal may impose. Any REC market that results in an expected cost to the average agent (in any of the $K$ sub-populations) above $R_0$ is deemed unviable for the regulated agents to enter into, and therefore will not be instituted, so as to not construct a REC market that runs the agents out of business. Note that this does not vary across sub-populations in our setup, though it could, and would not change the analysis significantly. 

More precisely, we choose contracts $\bm{C}$ such that the agents' best response results in $\int \inf \overline{J}^{a, \cdot}(g^{\cdot, \star}, \Gamma^{\cdot, \star})d\mu_0^{(k)} \leq R_0$ for all $k \in \mcK$, where $R_0$ represents some reservation cost.
We denote the set of admissible contracts $\bm{C}$ by
\begin{align}
    \Xi := \{ &\bm{C}: \Omega \rightarrow \RR^K:\, \bm{C} = (C^k(X_T^\ikrep))_{k \in \mcK}, \, C^k \text{ convex},\, \, \frac{dC^k}{dx} \text {exists}, \,\frac{d^2C^k}{dx^2} \text { bounded a.e},  \nonumber \\ \, & C^k \text{ Lipschitz}, \, \EE[C^k(X_T^\ikrep)^2] < \infty, \int V^{a,\ikrep}(C) d\theta_0^\ikrep \leq R_0, \, \forall k \in \mcK \} \label{eq:ncc_fun_set_theory}
\end{align}
{\color{black}
\begin{align}
    \Xi := \{ &\bm{C}: \Omega \rightarrow \RR^K:\, \bm{C} = (C^k(X_T^\ikrep))_{k \in \mcK}, \, C^k \text{ convex, continuously diff.},\, \frac{d^2C^k}{dx^2} \text { exists a.e}\,  \nonumber \\ \, & \frac{dC^k}{dx} \text{ bounded and Lipschitz}, \, \EE[C^k(X_T^\ikrep)^2] < \infty, \int V^{a,\ikrep}(C) d\theta_0^\ikrep \leq R_0, \, \forall k \in \mcK \} 
\end{align}} \note{dont need second derivative bounded a.e. (explicitly) because Lipschitz first derivative implies the second derivative is bounded whenever it exists}

\textcolor{black}{In words, we are restricting the principal to choose from convex penalty functions with continuous, bounded, and Lipschitz first derivative, and a second derivative that exists almost everywhere. The Lipschitzness of the first derivative gives us that the second derivative is bounded almost everywhere. We also require that they choose a penalty function that results in the agents' MFG having a solution. These are standard assumptions, and are also adopted  in \cite{elie2019tale}. }

In the context of the principal's problem, as introduced earlier, we use the superscript notation ${\cdot}^\ikrep$ to indicate a process related to a representative agent from sub-population $k$. The principal aims to solve the following optimization problem:
\begin{align}
    V^P := \inf_{\bm{C} \in \Xi} J^P(\bm{C}), \qquad J^P(\bm{C}) := \EE\left[U_P\left(\sum_{k \in \mcK} \pi_k \left[ -\, C^k(X_T^\ikrep) - \lambda^k X_T^\ikrep \right] \right)\right], \label{eq:principal_cost}
\end{align}
Here, $U_P$ represents the principal's utility function. This must be non-decreasing and convex, \dena{non-decreasing?}\note{CHECK THIS, see \href{https://web.stanford.edu/~boyd/cvxbook/bv_cvxbook.pdf}{here}}with a derivative that grows at most linearly. \note{check these requirements against what we need from AB-VC (2019)}This problem is thus similar to a Stackelberg (leader-follower) equilibrium between the principal and the infinitely many agents, who are themselves seeking a Nash equilibrium, subject to clearing conditions.

The intuition of the form of \eqref{eq:principal_cost} is to reflect that the principal aims to promote the generation of RECs (and hence wants terminal RECs to be large) while also desiring revenue in the form of non-compliance penalties to be paid to them.

As currently expressed, the principal's problem is not a standard stochastic control problem as the optimization is over some restricted random vector $\bm{C}$, as opposed to a stochastic process. We follow the steps of \cite{cvitanic2018dynamic} and \cite{elie2019tale} in order to reduce this problem to one that we can solve using existing techniques. Structurally, we will re-cast our problem to something similar to what is seen in \cite{elie2019tale}, as it will reduce to an optimal control problem that incorporates extended McKean-Vlasov dynamics.

We also note a subtlety with respect to the mean field distribution in the principal's problem which requires slightly different treatment and notation compared to the mean field distribution in the agent's problem, which is elaborated on in the remark below.

\begin{remark}
As discussed in \Cref{sec:agent_problem}, the solution methodology for the agent's problem (and more generally, classical MFG problems) keeps the mean field distribution $\bm{\theta}$ fixed initially. We perform the optimization for this fixed $\bm{\theta}$, which will imply a controlled distribution of $X_t^k$ that is dependent on this choice of $\bm{\theta}$. We then find a fixed point of $\bm{\theta}$; that is, a $\bm{\theta}$ such that the marginal distribution of the implied state $X_t^k$ is exactly 
$\mu_t^{(k)}$at all times $t \in \mfT$. This occurs because the search for a Nash equilibrium implies that an optimizing agent is holding all other agents' behaviour fixed, and under this assumption, is optimizing their own behaviour. In the infinite-population limit (when $N \rightarrow \infty$), this agent perturbing their behaviour does not make an impact on the overall distribution, and thus it is justified to hold the mean field distribution fixed, optimize for a representative agent, and then match the distribution used as an input with the implied distribution arising from the optimal controls.

By contrast, the nature of the principal's problem is such that any modification of the principal's control will impact (and perturb) all agents' behaviours. Therefore, a perturbation of the principal's control will have a significant impact on the mean field distribution, and hence it is not justified to hold it fixed while performing the principal's optimization. Instead, one must match the mean field distribution to the law of the state or control before optimizing. Intuitively, this difference reflects the fact that the principal behaves like a major agent in some sense, and has actions that impact all agents directly. These are often referred to as mean field control problems, problems of control of McKean-Vlasov dynamics or mean field type problems. We have a particular case of extended McKean-Vlasov dynamics. For a further discussion of these nuances, we refer the interested reader to \cite{carmona2013control} for problems of McKean-Vlasov dynamics, and \cite{acciaio2019extended} for problems of extended McKean-Vlasov dynamics.
\end{remark}

\section{Solution to optimization problems} \label{sec:problem_solving}

In this section, we solve for the agent's problem given a valid penalty function $\bm{C} \in \Xi$  from the principal. This will produce a best response of the agent, which in turn induces a controlled state and a mean field distribution.

We will then turn to the principal's problem armed with these, and find the $C$ that induces a response and mean field distribution that is optimal for the principal. This will be done by reducing the problem to one of control of extended stochastic dynamical systems of McKean-Vlasov type, as in \cite{acciaio2019extended}.

In this manner, we will have characterized the Stackelberg equilibrium (leader-follower) that exists between the principal and the agents.

\subsection{The best response of the agent} \label{sec:agent_response}
Here, we assume the principal chooses an arbitrary random vector $\bm{C} \in \Xi$. Due to the definition of $\Xi$ in \eqref{eq:ncc_fun_set_theory}, this guarantees the existence of a solution to the agent's problem. In particular, this is due to the assumed convexity and conditions on derivatives of the non-compliance penalty function, which make the problem similar to \cite{shrivats2020mean}. We summarize the results that we obtain from invoking the machinery developed in \cite{shrivats2020mean} in the theorem below.\dena{infinite population clearing cond is not indicated anywhere.}

\begin{theorem}[Summary of Solution of Agent's Problem] \label{thrm:rec_pa_theory_mfg_soln}
Fix $\bm{C} \in \Xi$. This induces a collection of penalty functions $\ovc$. For a firm $i$ belonging to sub-population $k$, the optimal controls of the firm are (given an exogenous mean field distribution $\bm{\theta}$):
\begin{align}
    g_t^{i, \star}  &= - \tfrac{Y_t^{X, i}}{\zeta^k} , \quad 
    \label{eq:rec_pa_theory_optG} \\
\Gamma_t^{i, \star} &=   \tfrac{1}{\gamma^k} \left( - Y_t^{X, i} - S_t^{\bm{\theta}}\right) \;\text{and} \label{eq:rec_pa_theory_optGamma} \\
\alpha_t^{i, \star} &= \tfrac{- Y_t^{A,i}}{\beta^k}. \label{eq:rec_pa_theory_optAlpha}  
\end{align}

The equilibrium REC price is then given by
\begin{equation}
S_t^{\bm{\theta}} = - \frac{\sum_{k\in\mcK} \eta_k \,\EE[Y_t^{X, \ikrep}]}{\eta}, \,\text{ a.s.} \label{eq:rec_pa_theory_equilibrium_price}
\end{equation}
where 
\begin{equation}\label{eq:rec_pa_theory_etak}
\eta_k= \frac{\pi_k}{\gamma^k}, \;\; \eta = \sum_{k \in \mcK} \frac{\pi_k}{\gamma^k}, \,\,\,k\in\mcK.
\end{equation}

To fully characterize these optimal controls and quantities, we must specify the mean field distribution 
as well as $Y_t$. The latter is done (again, given an exogenous $\bm{\mu}$) through the following FBSDE, for a firm $i$ in sub-population $k$: \dena{can this be written in a way that eq Nos appear in the same line?}

\begin{subequations}
\begin{align}
dX_t^{i} &= \left(h_t^k - \upsilon^k\,  Y_t^{X, i}
+ \tfrac{1}{\gamma^k \eta} \sum_{j\in\mcK} \eta^j \EE[Y_t^{X, (j)}] + A_t^i\right) \,dt + \sigma_t^k dW_t^i, \qquad &X_0^i=\xi^i
\label{eq:rec_pa_theory_HetFBSDE_fwdX} 
\\
dA_t^i &= \frac{-Y_t^{A, i}}{\beta^k} dt, \label{eq:rec_pa_theory_HetFBSDE_fwdA} \qquad &A_0^i = 0 \\
dY_t^{X, i} &= Z_t^{X, i} \,dW_t^i, \qquad &Y_T^{X,i} = \frac{dC^k}{dx}(X_T^i) \label{eq:rec_pa_theory_HetFBSDE_bwdX}
\\
dY_t^{A, i} &= -Y_t^{X,i} + Z_t^{A, i} \,dW_t^i,  \label{eq:rec_pa_theory_HetFBSDE_bwdA} \qquad &Y_T^{A, i} = 0
\end{align}
\label{eqn:rec_pa_theory_FBSDE-full}%
\end{subequations}%
where
\begin{equation}
\upsilon^k := \frac{1}{\gamma^k} + \frac{1}{\zeta^k},
\end{equation}
and $\xi^i\sim \mu^{(k)}_0$. 

Furthermore, there exists a mean field distribution $\bm{\theta}$ and a progressively measurable triple $(\bm{X^i, Y^i, Z^i}) = (\bm{X_t^i, Y_t^i, Z_t^i})_{t \in \mathfrak{T}}$ that satisfy \eqref{eqn:rec_pa_theory_FBSDE-full}, such that $\theta_t^{(k)}$ coincides with $\mathcal{L}(X_t^i) \times \mathcal{L}(A_t^i)$ for all $i \in \mfN_k$, for all $k \in \mcK$. Note $\bm{Y_t^i} := (Y_t^{X,i}, Y_t^{A,i})^T$ (similar to the definition of $\bm{X}$ below \eqref{eq:state_dynamicsMF}-\eqref{eq:capacity_dynamicsMF}, with $\bm{Z}$ defined similarly. 

In particular, this choice of $\bm{\theta}$ and $(\bm{X^i, Y^i, Z^i})$  characterizes a mean field distribution that implies optimal controls that result in the law of the controlled state being exactly that of the mean field distribution we started with. $Y_t^{X, i}$ also has a Markov form $Y^k(t, X_t^{i}, A_t^{i})$ \dena{with this presentation, is it right to use $S^{\mu}$?}, which is Lipschitz in both arguments. This specification leads to a Nash equilibrium across agents, thus solving the agent's problem in its entirety. Moreover, this solution is unique. 
\end{theorem}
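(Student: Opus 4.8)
The plan is to solve the fixed-$\bm{\theta}$ control problem by the Pontryagin stochastic maximum principle, impose the clearing condition to pin down the price, and then close the loop with a McKean--Vlasov fixed point. Because the running cost is strictly convex and quadratic in $(g,\Gamma,\alpha)$, the dynamics \eqref{eq:state_dynamicsMF}--\eqref{eq:capacity_dynamicsMF} are affine in state and controls, and $C^k$ is convex, the Hamiltonian is jointly convex in state and control; hence the first-order conditions are both necessary and sufficient, and the minimizer is unique. First I would form the Hamiltonian for a representative agent in sub-population $k$ (dropping the diffusive term $Z^X\sigma_t^k$, which is independent of both control and state),
\begin{equation*}
H = Y^{X}(h_t^k + g + \Gamma + A) + Y^{A}\alpha + \tfrac{\zeta^k}{2}g^2 + \tfrac{\gamma^k}{2}\Gamma^2 + \tfrac{\beta^k}{2}\alpha^2 + S_t^{\bm{\theta}}\Gamma,
\end{equation*}
where $Y^{X}$ and $Y^{A}$ are the adjoint variables conjugate to $X$ and $A$. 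Minimizing over $(g,\Gamma,\alpha)$ yields exactly \eqref{eq:rec_pa_theory_optG}--\eqref{eq:rec_pa_theory_optAlpha}, while the adjoint equations $dY_t = -\partial_x H\,dt + Z_t\,dW_t$ read $dY^{X,i} = Z^{X,i}\,dW^i$ with $Y_T^{X,i} = \frac{dC^k}{dx}(X_T^i)$ (since $X$ enters only the terminal cost) and $dY^{A,i} = -Y^{X,i}\,dt + Z^{A,i}\,dW^i$ with $Y_T^{A,i}=0$ (since $A$ enters the drift of $X$ with unit coefficient and is absent from the cost), recovering \eqref{eq:rec_pa_theory_HetFBSDE_bwdX}--\eqref{eq:rec_pa_theory_HetFBSDE_bwdA}.

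Next I would reconcile the two positivity constraints $g_t^i\ge 0$ and $\alpha_t^i\ge 0$ with these interior optimizers. Since $C^k$ is non-increasing, $\frac{dC^k}{dx}\le 0$, so $Y_T^{X,i}\le 0$; as $Y^{X,i}$ is a driftless, square-integrable martingale, $Y_t^{X,i}=\EE[Y_T^{X,i}\mid\Gg_t^i]\le 0$ for all $t$, whence $g_t^{i,\star}=-Y_t^{X,i}/\zeta^k\ge 0$. Integrating the $Y^{A,i}$ equation gives $Y_t^{A,i}=\EE\big[\int_t^T Y_s^{X,i}\,ds\mid\Gg_t^i\big]\le 0$, so $\alpha_t^{i,\star}=-Y_t^{A,i}/\beta^k\ge 0$. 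Thus the unconstrained optimizers automatically satisfy the constraints and no additional multipliers are required.

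I would then impose the mean-field clearing condition. Substituting \eqref{eq:rec_pa_theory_optGamma} into $\lim_{N\to\infty}\frac1N\sum_{i\in\mfN}\Gamma_t^i=0$, and using Assumption \ref{ass:proportion} together with a law of large numbers across sub-populations, yields $\sum_{k\in\mcK}\frac{\pi_k}{\gamma^k}\big(-\EE[Y_t^{X,(k)}]-S_t^{\bm{\theta}}\big)=0$; solving for the price gives \eqref{eq:rec_pa_theory_equilibrium_price} with $\eta_k,\eta$ as in \eqref{eq:rec_pa_theory_etak}. Reinserting the optimal controls and this price into \eqref{eq:state_dynamicsMF} produces the forward drift $h_t^k-\upsilon^k Y_t^{X,i}+\frac{1}{\gamma^k\eta}\sum_{j}\eta_j\EE[Y_t^{X,(j)}]+A_t^i$, i.e. the coupled McKean--Vlasov FBSDE \eqref{eqn:rec_pa_theory_FBSDE-full}.

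The main obstacle is the well-posedness of this FBSDE together with the mean-field fixed point: the coefficients depend on the law of the solution through the cross-population expectations $\EE[Y_t^{X,(j)}]$, so decoupled FBSDE theory does not apply directly. Here I would invoke the machinery of \cite{shrivats2020mean}: the system is of linear-quadratic type with a convex, Lipschitz-gradient terminal condition, which admits a decoupling field $Y^{X,i}=Y^k(t,X_t^i,A_t^i)$ that is Lipschitz in $(x,a)$ uniformly in $t$ --- the boundedness and Lipschitzness of $\frac{dC^k}{dx}$ guaranteed by the definition of $\Xi$ is precisely what delivers this regularity, hence existence and uniqueness on $[0,T]$. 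The fixed point in $\bm{\theta}$ is then obtained by a contraction (or Schauder) argument on the flow of measures, after which the marginals $\Ll(X_t^i)\times\Ll(A_t^i)$ coincide with $\theta_t^{(k)}$ by construction. Finally, convexity of the Hamiltonian supplies the verification (sufficiency) step, so the candidate controls are the unique best responses; combined with the consistency condition this establishes the Nash equilibrium and its uniqueness.
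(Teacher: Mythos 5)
Your proposal is correct and lands on the same skeleton as the paper's argument: first-order conditions for a convex problem yield the controls \eqref{eq:rec_pa_theory_optG}--\eqref{eq:rec_pa_theory_optAlpha} in terms of adjoint processes, the clearing condition pins down \eqref{eq:rec_pa_theory_equilibrium_price}, substitution produces the McKean--Vlasov FBSDE \eqref{eqn:rec_pa_theory_FBSDE-full}, and the Lipschitz, bounded gradient of $C^k$ delivers the Lipschitz decoupling field and the existence/uniqueness of the measure fixed point. The difference is one of packaging: the paper's proof is almost entirely a deferral to \cite{shrivats2020mean}, deriving the optimizers as the zeros of the G\^ateaux derivative of the cost functional (Propositions 4.2--4.6 and 4.13--4.14 there), whereas you reconstruct the same equations self-containedly via the Pontryagin maximum principle; in this linear--quadratic convex setting the two formalisms produce identical adjoint equations, so nothing is gained or lost mathematically. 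Your one genuine addition is the explicit verification that the unconstrained optimizers respect the admissibility constraints $g_t^i \geq 0$ and $\alpha_t^i \geq 0$: since $\tfrac{dC^k}{dx} \leq 0$ and $Y^{X,i}$ is a martingale, $Y_t^{X,i} \leq 0$, and hence $Y_t^{A,i} = \EE\bigl[\int_t^T Y_s^{X,i}\,ds \mid \Gg_t^i\bigr] \leq 0$. The paper never addresses this point, and without it the stated controls would in principle require projection onto the constraint set. Note, however, that your sign argument uses monotonicity of $C^k$, which the paper mentions only in passing as a sufficient condition and does not actually include in the formal definition of $\Xi$ in \eqref{eq:ncc_fun_set_theory}; so either that hypothesis must be added to $\Xi$, or the constraint issue remains open --- a gap in the paper that your write-up usefully exposes rather than a gap in your proof.
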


\begin{proof}

In this proof, we make frequent reference to \cite{shrivats2020mean}. While this problem is not identical to the problem considered in that work, the structures are extremely similar and as such, the proofs contained therein can be trivially adapted for our needs. 

The definition of $\Xi$ includes only convex and continuously differentiable functions
.The former, along with the similar running costs to the problem considered in \cite{shrivats2020mean} assures us that the agents' cost functional is convex (see Proposition 4.2 in \cite{shrivats2020mean}). The latter ensures the cost functional \eqref{eq:rec_pa_theory_agent_pc} is G\^ateaux differentiable (see Proposition 4.3 in \cite{shrivats2020mean}). 

This implies that an admissible control that makes the G\^ateaux derivative vanish (in all directions) is a minimizer of \eqref{eq:rec_pa_theory_agent_pc}. Moreover, convexity assures us that this minimizer is unique. 

By finding the controls for which the G\^ateaux derivative vanishes as in Propositions 4.4-4.5 
of \cite{shrivats2020mean}, we therefore know that the optimal controls have the forms specified by \eqref{eq:rec_pa_theory_optG}-\eqref{eq:rec_pa_theory_optAlpha}. 
\textcolor{black}{\eqref{eq:rec_pa_theory_equilibrium_price} occurs through market clearing conditions (see Proposition 4.6 in \cite{shrivats2020mean}
\dena{We don't give the definition of adjoint processes in this paper!}\note{See \eqref{eqn:rec_pa_theory_FBSDE-full}}. The FBSDE \eqref{eqn:rec_pa_theory_FBSDE-full} arises through the substitution of the optimal controls into the forward state equation (see Corollary 4.10 in \cite{shrivats2020mean}), and the definition of the adjoint process $\bm{Y_t}$. The boundedness and Lipschitz-ness 
of $\frac{dC^k}{dx}$ ensures that we can apply Remark 4.12  \dena{according to Delarue 2002, it seems that we need lipschitz continuity and boundedness condition for the derivative of C wrt X.}\note{This is now included in $\Xi$} in \cite{shrivats2020mean} to guarantee the Markov and Lipschitz nature of the adjoint process $Y_t^{X, i}$. }

Finally, we can apply the same proof of existence \dena{We need continuous differentiability of C for the existence (A4 in Carmona \& Delarue 2013)}\note{added throughout, and reflected in equivalence proof} and uniqueness to the MV-FBSDE \eqref{eqn:rec_pa_theory_FBSDE-full} as in Proposition 4.13 and 4.14 in \cite{shrivats2020mean}.  
\end{proof}

\begin{remark}
\color{black}
The equilibrium REC price when firms are behaving optimally can be equivalently stated in a variety of ways. That is, \dena{switching to $S^{\theta}$ again.}

\begin{align*}
S_t^{\bm{\theta}} &= - \frac{\sum_{k\in\mcK} \eta_k \,\EE[Y_t^{X, \ikrep}]}{\eta}, \,\text{ a.s.} \\
&= - \frac{\sum_{k\in\mcK} \eta_k \,\int Y^k(t, X_t^\ikrep, A_t^\ikrep) d\theta_t^\ikrep}{\eta}, \,\text{ a.s.} \\
&= -\frac{\sum_{k\in\mcK} \eta_k \,\int Y_t^{X, \ikrep} d\PP^{\bm{Y_t}}}{\eta}, \,\text{ a.s.} \\
&= S_t^{\PP^{\bm{Y_t}}}
\end{align*}

We refer to the third line as the REC price in McKean-Vlasov form, and the second line as the REC price in Markov form. 
\end{remark}

\note{The above remark is important; we need to be able to argue that these are equivalent and we can swap between them for the reduction to work}

\subsection{Recasting the Principal's Problem}
\note{Is there any doubt over whether our proofs of existence / uniqueness transfer over to the agent's problem with expansion}
Having justified the use of the methodology in \cite{shrivats2020mean} to find a Nash equilibrium among the agents for an arbitrary admissible penalty function, we must now optimize across the choice of said functions for the principal. 

To do so, we follow a similar strategy to \cite{cvitanic2018dynamic} and \cite{elie2019tale} in order to transform the principal's problem into a more standard stochastic control problem. When considering the principal's problem, we first note that the optimally controlled state process for a firm belonging to sub-population $k$ is:

\begin{equation}
dX_t^\ikrep = \left(h_t^k - \upsilon^k\,  Y_t^{X,\ikrep}
+ \tfrac{1}{\gamma^k \eta} \sum_{j\in\mcK} \eta^j \EE[Y_t^{X,(j)}] + A_t^\ikrep\right) \,dt + \sigma_t^k dW_t^k. \end{equation}

Note that the superscript $i$ is dropped (as it was in \Cref{sec:principal_problem}), to reflect that we are now discussing representative agents from sub-populations, as opposed to individuals. From \eqref{eq:rec_pa_theory_HetFBSDE_bwdX}, we can also see that $Y_t^{X,\ikrep}$ is a martingale for all $k \in \mcK$, and in particular, that $Y_t^{X,\ikrep} = \EE_t[Y_T^{X,\ikrep}] = \EE_t[\frac{dC^k}{dx}(X_T^\ikrep)]$. 

From this, it is clear to see that the principal's choice of $\ovc$ impacts $Y^{X, \ikrep}$ for each sub-population, which in turn impacts $Y^{A, \ikrep}$, as well as the agent's optimal response and controlled states. This motivates the idea of treating $Y^{X, \ikrep}$ as the principal's `control' and turning their problem into a more standard stochastic control problem, albeit one that still must deal with the complexity of the mean field distribution which shows up in the state dynamics of $X$. As before, we emphasize a subtle difference between the nature of the mean fields that show up in the agents' and principal's problems.

The principal sets their choice of $\ovc$ (equivalently $\bm{Y_T^{\ikrep}}$, for all $k \in \mcK$, up to a constant
). This results in the McKean-Vlasov FBSDE \eqref{eqn:rec_pa_theory_FBSDE-full} for the agents' problem, which we know has a fixed point across the space of mean field measure flows. This means that there is a bivariate distribution, which, when taken as an input to the problem, results in optimal controls and hence controlled states that have joint law exactly equal to the distribution that was taken as an input. This results in consistency in the mean field game, as desired.

From the principal's perspective, they can view their choice of $\ovc$ as inducing the best response of the agents, which is guaranteed to result in a consistent solution to the mean field game. As such, when the principal optimizes, it is more accurate to consider the distribution of the mean field matched to the joint law of the states at each point in time. More specifically, this leads to a mean-field type problem for the principal. This is in contrast to the more myopic agents, who do not know nor care about the mean field distribution. The agents seek only to minimize their costs given what they see in their filtrations, with the knowledge that they are individually insignificant and do not have an impact on the market. Hence, from their perspective, viewing the mean field distribution as fixed until after the optimization is completely makes more sense. 

Therefore, from the principal's perspective, we can re-state the dynamics of $X$ and $A$ to be:

\begin{align}
dX_t^\ikrep &= \left(h_t^k - \upsilon^k\,  Y_t^{X,\ikrep}
+ \tfrac{1}{\gamma^k \eta} \sum_{j\in\mcK} \eta^j \int Y_t^{X,(j)} d\PP^{\bm{Y_t^{(j)}}} + A_t^\ikrep \right) \,dt + \sigma_t^k dW_t^k, \label{eq:rec_pa_theory_principal_RECs} \\
dA_t^\ikrep &= \frac{- Y_t^{X, \ikrep}}{\beta^k} dt \label{eq:rec_pa_theory_principal_capacity}\end{align}

In order to turn this problem into a more standard stochastic control problem, we consider the dynamic version of an agent's value function, which is given by the definition below.
\begin{definition} \label{def:val_fn_sde}
For each $k \in \mcK$, we make the following definitions, given controls $g^k, \Gamma^k$, mean field distribution $\bm{\theta}$, and penalty function $C^k$:
\begin{align}
    U^\ikrep(t, X_t^\ikrep, A_t^\ikrep) &:=  \EE_t\left[\int_t^T \left(\tfrac{\zeta^k}{2}(g_u)^2 + \tfrac{\gamma^k}{2}(\Gamma_u)^2 + \tfrac{\beta^k}{2}(\alpha_u)^2 + S_u^{\bm{\theta}} \Gamma_u \right)du + C^k(X_T^k)\right], \\ U_t^\ikrep &= U^\ikrep(t, X_t^\ikrep, A_t^\ikrep)
\end{align}

We call $U_t^\ikrep$ the agent's continuation utility. The continuation utility induced by the optimal controls is called the value function, and is denoted by $V_t^\ikrep$.
\end{definition}

It is well-known that (given $\bm{\theta}$) $V_t^\ikrep$ satisfies the following BSDE (see Lemma 4.47 of \cite{carmona2018probabilistic}\note{This lemma uses a boundedness assumption, so we may have to change this citation. We might also be able to remove the citation entirely, because we essentially prove this later}): 

\begin{align}
    dV_t^\ikrep &= -f(t, X_t^\ikrep, A_t^\ikrep, \bm{\theta}, g^{\star}(\bm{Y_t^\ikrep}), \Gamma^{\star}(\bm{Y_t^\ikrep}), \alpha^{\star}(\bm{Y_t^\ikrep})dt + \sigma^k Y_t^{X, \ikrep} dW_t^k \label{eq:val_fn_bsde} \\ V_T^\ikrep &= C^k(X_T^\ikrep) \label{eq:val_fn_bsde_TC}
\end{align}
where 
\begin{align}
    f(t, X_t^\ikrep, A_t^\ikrep, \bm{\theta}, g, \Gamma, \alpha) &:= \tfrac{\zeta^k}{2}(g_t)^2 + \tfrac{\gamma^k}{2}(\Gamma_t)^2 + \tfrac{\beta^k}{2}(\alpha_t)^2 + S_t^{\bm{\theta}} \Gamma_t,
\end{align}
and $g^\star, \Gamma^\star, \alpha^\star$ are the controls that optimize the representative agent's Hamiltonian function. This assumes that the value function exists in a classical sense, and is once differentiable in time and twice differentiable in states. This is a standing assumption throughout our work.

Note that we could combine the forward equations \eqref{eq:rec_pa_theory_HetFBSDE_fwdX}-\eqref{eq:rec_pa_theory_HetFBSDE_fwdA} with \eqref{eq:val_fn_bsde} and obtain another FBSDE whose solution would characterize the optimal behaviour of the agent, given a mean field distribution $\bm{\theta}$. Accordingly, the solution to the McKean-Vlasov version of this composite FBSDE would solve the agents' problem in full, as the solution to the McKean-Vlasov version of \eqref{eqn:rec_pa_theory_FBSDE-full} does. We note that the $\bm{Y_t^\ikrep}$ term above is the same as the $\bm{Y_t^\ikrep}$ terms in the statement of \Cref{thrm:rec_pa_theory_mfg_soln}.

Therefore, we can use our knowledge from \Cref{thrm:rec_pa_theory_mfg_soln}, which tells us that optimal controls $(g^\ikrep, \Gamma^\ikrep, \alpha^\ikrep)_{k \in \mcK}$ as well as a mean field distribution $\bm{\theta}$ exist which obtain a Nash equilibrium across agents. We can modify \eqref{eq:val_fn_bsde} by substituting the optimal controls and equilibrium REC price obtained through \Cref{thrm:rec_pa_theory_mfg_soln}:

\begin{align}
    dV_t^\ikrep &= \biggl(- \tfrac{1}{2}\upsilon^k (Y_t^{X,\ikrep})^2 - \tfrac{1}{2} (\beta^k)^{-1} (Y_t^{A, \ikrep})^2 +  \tfrac{1}{2\gamma^k} \left(\frac{\sum_{j \in \mcK} \eta^j \int - (Y_t^{X, (j)})  d\PP^{\bm{Y_t^{(j)}}}}{\eta}\right)^2\biggr)dt \nonumber \\ & \qquad+ \sigma^k Y_t^{X, \ikrep} dW_t^k \label{eq:val_fn_bsde_opt}, \\
    V_T^\ikrep &= C^k(X_T^\ikrep). \label{eq:val_fn_bsde_opt_tc}
    \end{align} 

We consider the dynamic version of the agent's value function as we wish to reframe the principal's problem from choosing over penalty functions described by $\ovc$ to choosing over stochastic processes $Y_t^{X,\ikrep}$. In doing so, we could replace the $C^k(X_T^\ikrep)$ term in \eqref{eq:principal_cost} with $V_T^\ikrep$. However, in order for this to be rigorously justified, we must argue that any valid penalty function $C$ can be represented in the form  $V_T^\ikrep = C^k(X_T^\ikrep)$ for all $k \in \mcK$.

Consider the set of McKean-Vlasov SDEs given by the following:

\begin{align}
    \hat{X}_t^\ikrep &= \hat{X}_0^\ikrep + \int_0^t \left(h_r^k - \upsilon^k\,  \hat{Y}_r^{X, \ikrep}
+ \tfrac{1}{\gamma^k \eta} \sum_{j\in\mcK} \eta^j \int \hat{Y}_r^{X, (j)} d\PP^{\bm{\hat{Y}_r^{(j)}}} + \hat{A}_r^\ikrep \right) \,dr + \int_0^t \sigma^k dW_r^k \label{eq:MV_SDE_state} \\
\hat{A}_t^\ikrep &= \hat{A}_0^\ikrep + \int_0^t -\frac{\hat{Y}_r^{X, \ikrep}}{\beta^k} dr \label{eq:MV_SDE_stateA}
\\
\hat{V}_t^{\ikrep, \hat{Y}^\ikrep} &= \hat{V}_0^\ikrep + \int_0^t \left(- \tfrac{1}{2}\upsilon^k (\hat{Y}_r^{X, \ikrep})^2 - \tfrac{1}{2}(\beta^k)^{-1}(\hat{Y}_t^{A, \ikrep})^2+ \tfrac{1}{2\gamma^k} \left(\frac{\sum_{j \in \mcK} \eta^j \int - (\hat{Y}_r^{(j)})  d\PP^{\bm{\hat{Y}_r^{(j)}}}}{\eta}\right)^2\right)dr +  \nonumber \\ &\qquad \qquad \int_0^t \sigma^k \hat{Y}_r^\ikrep dW_r^k. \label{eq:MV_SDE_val}
\end{align}

We define the set $\hat{\Xi}$ as follows:


\begin{align}
    \hat{\Xi} := \{ &(V_T^{\ikrep, \bm{Y^\ikrep}})_{k \in \mcK}: \forall k \in \mcK ,\; V_t^{\ikrep, \bm{Y_t^\ikrep}}, \bm{Y_t^\ikrep} \text { solves } \eqref{eq:MV_SDE_state}-\eqref{eq:MV_SDE_val}, \EE[V_0^\ikrep] \leq R_0,\, \\  &\EE[(V_T^{\ikrep, Y^\ikrep})^2] < \infty,\nonumber V_T^{\ikrep, \bm{Y^\ikrep}} = G(X_T^{(k)}) \text { s.t. $G$ convex, $G'$ exists,} \nonumber \\
    & G^\prime \text{ is continuous, Lipschitz and bounded, and $G''$ exists a.e.}\,\} \nonumber \label{eq:ncc_fun_set_alt_theory}
\end{align} 

We wish to show that this is in fact equivalent to the set $\Xi$.\dena{maybe $G^k$?} 

\begin{proposition} \label{prop:reduction}
For $\Xi$ as defined through \eqref{eq:ncc_fun_set_theory} and $\hat{\Xi}$ defined through \eqref{eq:ncc_fun_set_alt_theory}, $\Xi = \hat{\Xi}$.
\end{proposition}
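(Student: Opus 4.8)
The plan is to prove the set equality by establishing the two inclusions $\Xi \subseteq \hat{\Xi}$ and $\hat{\Xi} \subseteq \Xi$, using \Cref{thrm:rec_pa_theory_mfg_soln} as the bridge between a contract $\bm{C}$ and the value-process data $(V,\bm{Y})$. The guiding observation is that the function-regularity conditions defining the two sets (convexity, continuous differentiability, a bounded and Lipschitz first derivative, an a.e.-existing second derivative, and square integrability of the terminal random variable) are \emph{identical}. Hence the only substantive work is to show that the reservation constraints $\int V^{a,\ikrep}(C)\,d\theta_0^\ikrep \le R_0$ and $\mathbb{E}[V_0^\ikrep]\le R_0$ describe the same quantity, and that the map $C^k \mapsto V_T^{(k)}$ is a bijection onto the admissible terminal value processes.

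For $\Xi \subseteq \hat{\Xi}$, I would fix $\bm{C}\in\Xi$ and invoke \Cref{thrm:rec_pa_theory_mfg_soln}. This yields a unique MFG solution with adjoint process $\bm{Y}$ solving \eqref{eqn:rec_pa_theory_FBSDE-full}; in particular $Y^{X,\ikrep}$ is a martingale with $Y_T^{X,\ikrep}=\frac{dC^k}{dx}(X_T^\ikrep)$. Defining $V^\ikrep$ through the value BSDE \eqref{eq:val_fn_bsde_opt}--\eqref{eq:val_fn_bsde_opt_tc} and substituting the equilibrium price, the pair $(V^\ikrep,\bm{Y}^\ikrep)$ solves exactly the forward system \eqref{eq:MV_SDE_state}--\eqref{eq:MV_SDE_val} with terminal value $V_T^\ikrep=C^k(X_T^\ikrep)$. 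The remaining membership conditions are then immediate: square integrability transfers since $\mathbb{E}[(V_T^\ikrep)^2]=\mathbb{E}[C^k(X_T^\ikrep)^2]<\infty$, the function $G:=C^k$ inherits all regularity, and the reservation constraint matches because $V_0^\ikrep$ is precisely the representative agent's optimally controlled value function, so that $\mathbb{E}[V_0^\ikrep]=\int V^{a,\ikrep}(C)\,d\theta_0^\ikrep\le R_0$.

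For the reverse inclusion $\hat{\Xi}\subseteq\Xi$, I would take $(V_T^\ikrep)_{k\in\mcK}\in\hat{\Xi}$, write $V_T^\ikrep=G^k(X_T^\ikrep)$, and set $C^k:=G^k$. Since $G^k$ carries exactly the regularity required for admissibility in \eqref{eq:ncc_fun_set_theory}, \Cref{thrm:rec_pa_theory_mfg_soln} applies to $C^k$ and produces a unique solution $(\tilde{V},\tilde{\bm{Y}})$ of the MV-FBSDE with $\tilde{V}_T^\ikrep = C^k(X_T^\ikrep)=V_T^\ikrep$. The crux is to argue that this solution coincides with the $(V,\bm{Y})$ attached to the given element of $\hat{\Xi}$: because $Y^{X,\ikrep}$ is forced to be the martingale $\mathbb{E}_t[\frac{dC^k}{dx}(X_T^\ikrep)]$ and the forward system \eqref{eq:MV_SDE_state}--\eqref{eq:MV_SDE_val} is well-posed, uniqueness identifies the two. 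Once $V_0^\ikrep=\tilde{V}_0^\ikrep=V^{a,\ikrep}(C^k)$ is established, the constraint $\mathbb{E}[V_0^\ikrep]\le R_0$ becomes $\int V^{a,\ikrep}(C)\,d\theta_0^\ikrep\le R_0$, giving $\bm{C}\in\Xi$.

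The step I expect to be the main obstacle is precisely this identification in the reverse inclusion. A priori, the definition of $\hat{\Xi}$ treats $\bm{Y}$ as a free driver and only constrains the terminal value via $V_T=G(X_T)$, so one must rule out spurious solutions in which $\bm{Y}$ is not the adjoint process and $V$ is not the agent's value function. Closing this gap relies on the martingale characterization $Y_T^{X,\ikrep}=\frac{dC^k}{dx}(X_T^\ikrep)$ together with the uniqueness of the decoupled McKean--Vlasov FBSDE granted by \Cref{thrm:rec_pa_theory_mfg_soln}; concretely, verifying that the terminal constraint $V_T=G(X_T)$ pins down $\bm{Y}$ to the adjoint form (via Itô's formula applied to $G(X_t^\ikrep)$ and matching the martingale integrand against $\sigma^k Y_t^{X,\ikrep}$) is the technical heart of the argument.
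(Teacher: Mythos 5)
Your proposal is correct and follows essentially the same route as the paper: both inclusions are established by using \Cref{thrm:rec_pa_theory_mfg_soln} to pass between a contract $C^k$ and the adjoint/value pair $(V^\ikrep,\bm{Y}^\ikrep)$, with the regularity conditions on $G$ transferring verbatim and the reservation constraints matched through the identity $\EE[V_0^\ikrep]=\int V^{a,\ikrep}(C)\,d\theta_0^\ikrep$. The only divergence is in the reverse inclusion, where you insist on showing that any admissible driver $\bm{Y}$ is forced to coincide with the adjoint process, whereas the paper simply selects $\bm{Y}$ to be the adjoint associated with $G^k$ and verifies membership for that choice; your reading is the more careful one given how $\hat{\Xi}$ is defined, but it does not change the overall argument.
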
 
\begin{proof}
We will show this is true by showing that $\Xi \subset \hat{\Xi}$ and vice-versa.

$\bm{\Xi \subset \hat{\Xi}}$:

First, consider $\chi \in \Xi$. $\chi$ can be expressed as $\chi = (C^k(X_T^\ikrep))_{k \in \mcK}$, where each $C^k$ is convex, continuously differentiable with bounded and Lipschitz derivative, and with a second derivative which exists a.e. We also have that $\EE[C^k(X_T^\ikrep)^2] < \infty$. 

Consider the agents' MFG problem with $\ovc = (C^1(\cdot), \cdots, C^K(\cdot))$ defining the non-compliance functions they face. As covered in \cref{thrm:rec_pa_theory_mfg_soln}, we know this MFG has a solution, as the agents' cost functional is convex and G\^ateaux differentiable. Therefore, a control is optimal if and only if it makes the G\^ateaux derivative of the agents' cost functional vanish. This occurs when the controls are as in \eqref{eq:rec_pa_theory_optG}-\eqref{eq:rec_pa_theory_optAlpha}, with them being fully specified by the solution to the MV-FBSDE described by \eqref{eqn:rec_pa_theory_FBSDE-full}. Moreover, we know a unique solution to this exists, again by \Cref{thrm:rec_pa_theory_mfg_soln}. 

Consider the optimally controlled value function of a representative agent from sub-population $k$ from time $0$, denoted by $V^{a, \ikrep}(C^k)$. We can also consider this indexed through time, as $V_t^\ikrep$, or in it's Markov representation $V^\ikrep(t, x, a)$, representing the cost to the agent if they behave optimally starting at time $t$ with REC inventory $x$ and generation capacity $a$. Naturally, $V_T^\ikrep = C^k(X_T^\ikrep)$. By It\^o's Lemma, the dynamics of the value function process are:

\begin{align}
    dV^\ikrep_t &= (\partial_t V^\ikrep + (h_t^k + g^{\ikrep, \star}_t + \Gamma^{\ikrep, \star}_t + A_t^\ikrep) \partial_x V^\ikrep + \alpha_t^{\ikrep, \star} \partial_A V^\ikrep +  \tfrac{1}{2}\sigma^k \partial_{x x} V^\ikrep) dt \\ + &\sigma \partial_x V dW_t^k, \qquad \qquad {V}_T^\ikrep = C^k({X}_T^\ikrep). \label{eq:vf_dynamics}
\end{align}

Here, the starred controls represent the optimals. From \Cref{thrm:rec_pa_theory_mfg_soln}, we know the form of these optimals in terms of the stochastic process $\bm{Y_t^\ikrep}$ and mean field distribution $\bm{\theta}$. 

\textcolor{black}{Under sufficient regularity of $V^\ikrep$, we obtain that $\partial_x V^\ikrep(t, \tilde{X}_t^\ikrep, \tilde{A}_t^\ikrep) = \tilde{Y}_t^{\tilde{X}, \ikrep}$ and $\partial_A V^\ikrep(t, \tilde{X}_t^\ikrep, \tilde{A}_t^\ikrep) = \tilde{Y}_t^{\tilde{A}, \ikrep}$, where $\tilde{X}, \tilde{A}$ represent the controlled state processes under the optimal controls (see for example, Theorem 6.4.7 in \cite{pham2009continuous}\note{This result isn't exactly what we need... ideally, we want to show $D V^\ikrep = Y$ in our situation. I think there's a result in XYZ (theorem 4.1) that may work, but it seems too restrictive}).}

We know such a $\bm{\tilde{Y_t}^\ikrep}$, along with an associated mean field distribution $\bm{\tilde{\theta}}$ exists by \Cref{thrm:rec_pa_theory_mfg_soln}, and can use the DPE along with the optimal controls given by \eqref{eq:rec_pa_theory_optG} - \eqref{eq:rec_pa_theory_optAlpha} in \eqref{eq:vf_dynamics} to obtain:

\begin{align}
    dV_t^\ikrep = &\biggl[ -\tfrac{1}{2}\upsilon^k(\tilde{Y}_t^{X, \ikrep})^2 - \tfrac{1}{2}(\beta^k)^{-1}(\tilde{Y}_t^{A, \ikrep})^2 + \tfrac{1}{2}(\gamma^k)^{-1}(S_t^{\bm{\tilde{\theta}}})^2\biggr] \nonumber \\ &+ \sigma^k \tilde{Y}_t^{X, \ikrep} dW_t^k.
\end{align}

Integrating, we see that

\begin{align}
    V_t^\ikrep = V_T^\ikrep + &\int_t^T\biggl( \tfrac{1}{2}\upsilon^k(\tilde{Y}_r^{X, \ikrep})^2  + \tfrac{1}{2}(\beta^k)^{-1}(\tilde{Y}_r^{A, \ikrep})^2 - \tfrac{1}{2}(\gamma^k)^{-1}(S_r^{\bm{\tilde{\theta}}})^2\biggr)dr \\ &+ \int_t^T \sigma^k \tilde{Y}_r^{X, \ikrep}dWr^k. \label{eq:controlled_vf_dynamics} \nonumber
\end{align}

By substituting in $V_T^\ikrep = C^k(\tilde{X}_T^\ikrep)$ and taking time $t$ conditional expectation, we find

\begin{align}
    V_t^\ikrep = \EE_t\biggl[C^k(\tilde{X}_T^\ikrep) + &\int_t^T\biggl( \tfrac{1}{2}\upsilon^k(\tilde{Y}_r^{X, \ikrep})^2  + \tfrac{1}{2}(\beta^k)^{-1}(\tilde{Y}_r^{A, \ikrep})^2 - \tfrac{1}{2}(\gamma^k)^{-1}(S_r^{\bm{\tilde{\theta}}})^2\biggr)dr\biggr]. \label{eq:controlled_vf}
\end{align}

If we set $t = 0$, this becomes

\begin{align}
    V_0^\ikrep = \EE_0\biggl[C^k(\tilde{X}_T^\ikrep) + &\int_0^T\biggl( \tfrac{1}{2}\upsilon^k(\tilde{Y}_r^{X, \ikrep})^2  + \tfrac{1}{2}(\beta^k)^{-1}(\tilde{Y}_r^{A, \ikrep})^2 - \tfrac{1}{2}(\gamma^k)^{-1}(S_r^{\bm{\tilde{\theta}}})^2\biggr)dr\biggr].
\end{align}

But the right hand side is exactly $V^{a, \ikrep}(C^k)$. Finally, we observe that if we take expectations with respect to $\theta_0^\ikrep$, we get $\EE[V_0^\ikrep] = \int V^{a, \ikrep}(C^k) d\theta_0^\ikrep \leq R_0$. 

Consider the $V_T^\ikrep$ which emerges from \eqref{eq:controlled_vf_dynamics}. By construction, we see that it is the time $T$ value of the stochastic process $V_t^\ikrep$ which satisfies the appropriate McKean-Vlasov SDEs \eqref{eq:MV_SDE_state}-\eqref{eq:MV_SDE_val} (along with the processes $\tilde{X},\tilde{A}, \tilde{Y},  \tilde{Z}$, and subbing in the equilibrium REC price in McKean-Vlasov form).\note{Is there an issue with how the SDE is stated with respect to the REC price... sometimes we write it as dependent on $\bm{\mu}$ and sometimes we write it in the McKean-Vlasov form where it is integrated with respect to $\bm{Y}$.} 

This holds for all $k \in \mcK$. We also can see that $V_T^\ikrep = C^k(X_T^\ikrep)$ with $C^k$ inheriting convexity. It also inherits that its first derivative is continuous, bounded, and Lipschitz. Finally, it inherits existence of second derivative a.e., and integrability. Finally, as $\EE[V_0^\ikrep] = \int V^{a, \ikrep}(C^k) d\theta_0^\ikrep \leq R_0$, we have that $V_T^\ikrep \in \hat{\Xi}$, as required. 
\note{modified the above paragraph for the corrected version of $\hat{\Xi}$ and $\Xi$, which reflects that Delarue 2002 applies to $C^\prime$, not $C$.} 

$\bm{\hat{\Xi} \subset \Xi:}$

Consider $(V_T^{\ikrep, \bm{Y_t^\ikrep}})_{k \in \mcK} \in \hat{\Xi}$. By definition, $\forall k \in$ we have that $V_T^{\ikrep, \bm{Y_t^\ikrep}} = G^k(X_T^\ikrep)$, where $G^k$ is convex, $G^{k \prime}$ is continuous, bounded, and Lipschitz, and $G^{k \prime \prime}$ exists a.e. \note{modified above paragraph to reflect proper use of Delarue 2002}

Moreover, we have that $V_t^{\ikrep, \bm{Y_t^\ikrep}}$ is a component of a solution to the MKV SDEs described in \eqref{eq:MV_SDE_state}-\eqref{eq:MV_SDE_val}, which is induced by the choice of $\bm{Y_t^\ikrep}$ . Therefore, we can choose $\bm{Y_t^\ikrep}$ such that they are part of the solution to the MV-FBSDE described by \eqref{eqn:rec_pa_theory_FBSDE-full}, with $G^{k \prime}$ \dena{$G^\prime$?}\note{changed this; it previously said G, incorrectly} as the terminal condition for $Y_t^{X, \ikrep}$. Let us refer to this as $\bm{\hat{Y}_t^\ikrep}$. This choice implies a set of optimal controls, optimally controlled states, and mean field distribution $\bm{\theta}$ whose law coincides with the law of the controlled states at all times. These implied processes become part of the solution to \eqref{eq:MV_SDE_state}-\eqref{eq:MV_SDE_val}, along with $V_t^{\ikrep, \bm{Y_t^\ikrep}}$. By the definition of $\hat{\Xi}$, we also know that $\EE[G^k(\hat{X}_t^\ikrep)^2] < \infty$.

$\bm{\hat{Y}_t^\ikrep}$ corresponds to the solution to the agents' MFG with $\overrightarrow{G}:=(G^1(\cdot), \cdots, G^K(\cdot))$ as the non-compliance penalty function. Crucially, we know that $\bm{\hat{Y}_t^\ikrep}$ exists by the assumed properties of $G^k$ (convexity, bounded continuous differentiability,  and second derivative existing everywhere in particular), which make it such that a solution to the agents' MFG is guaranteed. Therefore, consider $(V_T^{\ikrep, \bm{\hat{Y}_t^\ikrep}})_{k \in \mcK} \in \hat{\Xi}$.

Consequently, \eqref{eq:MV_SDE_val} represents the dynamic value process of a representative agent in the MFG with penalty function $G^k$\note{does this need more explanation}. Substituting appropriately, \eqref{eq:MV_SDE_val} becomes

\begin{align}
    V_0^{\ikrep, \bm{\hat{Y}_0^\ikrep}} &= G^k(\hat{X}_T^\ikrep) + \int_0^T \biggl( \tfrac{1}{2}\upsilon^k (\hat{Y}_t^{X, \ikrep})^2 + \tfrac{1}{2}(\beta^k)^{-1}(\hat{Y}_t^{A, \ikrep})^2 + \tfrac{1}{2}(\gamma^k)^{-1}(S_t^{\PP^{\bm{\hat{Y}_t}}})^2 \biggr)dt  \nonumber \\
    &\qquad\qquad\qquad  + \sigma^k \int_0^T Y_t^{X, \ikrep} dW_t^k. \\
    \implies \EE[V_0^{\ikrep, \bm{\hat{Y}^\ikrep}}] &= \int V^{a, \ikrep}(G^k)d\theta_0^{\ikrep} \leq R_0.
\end{align}

This holds for all $k \in \mcK$, with the inequality holding due to the assumption that $\EE[V_0^{\ikrep, \bm{\hat{Y}^\ikrep}}] \leq R_0$ for all $k \in \mcK$ due to the definition of $\hat{\Xi}$. 

Therefore, we have that $(V_T^{\ikrep, \bm{\hat{Y}^\ikrep}})_{k \in \mcK} \in \Xi$, as required. 
\end{proof}

Note that $V_T^{\ikrep, \bm{Y^\ikrep}}$ is induced by $\bm{Y_t^\ikrep}$ and  $V_0^{\ikrep, \bm{Y_0^\ikrep}}$; these will ultimately be the control processes of the principal. To ease the notational burden, we suppress the $\bm{Y_t^\ikrep} $ superscript going forward. This proposition allows us to express the principal's problem as the following: 
\note{need to mention $U_P$ is non-decreasing and convex; this gives us convexity w/r/t X, V}
\begin{equation}
    \inf_{V_0^\ikrep \leq R_0} \inf_{(\bm{Y_t^\ikrep})_{t \in \mfT,k\in\mcK} \in L^2} \EE\left[U_P\left(\sum_{k \in \mcK} \pi_k \left( - V_T^\ikrep  - \lambda^k X_T^\ikrep  \right) \right) \right], \label{eq:principal_problem_final}
\end{equation}

subject to \dena{for the reservation utility shouldn't we use integral?}

\begin{align}
    dX_t^\ikrep &= \left(h_t^k - \upsilon^k\,  Y_t^{X, \ikrep}
    + \tfrac{1}{\gamma^k \eta} \sum_{j\in\mcK} \eta^j \int Y_t^{X, (j)} d\PP^{\bm{Y_t^{\ikrep}}} + A_t^\ikrep \right) \,dt + \sigma_t^k dW_t^k. \label{eq:agent_recs_state} \\
    dA_t^\ikrep &= \tfrac{-Y_t^{X, \ikrep}}{\beta^k} dt \label{eq:agent_capacity_state} \\
    dV_t^\ikrep &= \left(- \tfrac{1}{2}\upsilon^k (Y_t^{X, \ikrep})^2 - \tfrac{1}{2}(\beta^k)^{-1}(Y_t^{A, \ikrep})^2 + \tfrac{1}{2\gamma^k} \left(\frac{\sum_{j \in \mcK} \eta^j \int - (Y_t^{X, (j)}) d\PP^{\bm{Y_t^{\ikrep}}}}{\eta}\right)^2\right)dt + \nonumber\\ & \qquad \sigma^k_t Y_t^\ikrep dW_t^k. \label{eq:agent_val_fn_state}
\end{align}

Note that we have the above state equations for each $k \in \mcK$, so the overall state space dimension is $3K$.

We can separate \eqref{eq:principal_problem_final} into two separate choices: the choice over $V_0^\ikrep$, and the choice over $\bm{Y_t^\ikrep}$. We currently consider only the latter\note{unclear how $V_0$ actually gets determined to me}. 

When optimizing over the choice of $\bm{Y_t^\ikrep}$, we have a more standard (though still quite non-standard) optimal control problem. Before proceeding to solve it using extended McKean-Vlasov techniques discussed in \cite{acciaio2019extended}, we recap the intuition behind our general approach to this principal agent problem, and its recasting to a decision over a stochastic process instead of a penalty function. The principal's decision is as follows. They optimize over their choices of penalty functions, which is equivalent to optimizing over choices of $\bm{Y_t^\ikrep}$ for all $k \in \mcK$. This induces a best response for the agents that is a function of $\bm{Y_t^\ikrep}$. In our case, this is governed by \eqref{eq:rec_pa_theory_optG} and \eqref{eq:rec_pa_theory_optAlpha}. Through $\bm{Y_t^\ikrep}$, these controls feature in \eqref{eq:agent_recs_state}, \eqref{eq:agent_capacity_state}, and \eqref{eq:agent_val_fn_state}, which impact the performance criterion of the principal. $\bm{Y_t^\ikrep}$ is part of the solution to \eqref{eq:val_fn_bsde_opt}, which implicitly incorporates the optimal controls of the agent, and thus ties the choice of $\bm{Y_t^\ikrep}$ to an optimally controlled dynamic value process, and through the terminal condition of said process, back to the non-compliance penalty function $C^k$.

The principal's problem described by \eqref{eq:principal_problem_final}, \eqref{eq:agent_recs_state}, \eqref{eq:agent_capacity_state}, and \eqref{eq:agent_val_fn_state} problem falls into a class of optimal controlled problems of extended McKean-Vlasov type, due to the dependence of the state equations on the law of the controlled process (in this case, $Y_t^{X, \ikrep}$).

Broadly, our methodology is as follows. We express the Hamiltonian of the principal's problem, and aim to optimize it over the choice of $Y_t^\ikrep$, for all $k \in \mcK$ simultaneously. Loosely, the method by which we do so is by taking the Hamiltonian, differentiating it with respect to the control, and finding the roots (assuming the Hamiltonian is sufficiently convex). However, due to the dependence of the principal's state equations on the distribution of the control ($Y_t^{X, \ikrep}$ in particular), the derivative we take includes a derivative with respect to the relevant measure. More specifically, we seek to use the sufficient condition for optimality described by Theorem 3.5 in \cite{acciaio2019extended}. 

Using this, we obtain first order conditions for the principal's controls in terms of some adjoint stochastic processes. We can then use the stochastic maximum principle developed in \cite{acciaio2019extended} to express an FBSDE for these adjoints, the solution to which fully characterizes optimal control $\bm{Y_t^\ikrep}$. This optimal control corresponds to a penalty function $C^k$.

\subsection{The Principal's Hamiltonian}

As discussed, we plan on using \cite{acciaio2019extended} to find the optimal controls for the principal's problem. To do so, we must ensure the principal's problem satisfies Assumptions (I)-(II) in \cite{acciaio2019extended}. Assumption (I) is clearly satisfied, while Assumption (II) requires that we impose a boundedness condition on $\bm{Y_t} = (\bm{Y_t^\ikrep})_{k \in \mcK}$, and that we choose $U_P$ such that its derivative have at most linear growth. Utility functions of the form $u(x) = x^a$ where $a \in [1,2)$, and \textcolor{black}{insert other utility function here} are examples of valid utility functions for this. \note{please check this paragraph, and lets discuss whether we need to add more here}

We also require the Hamiltonian be convex. We make that more explicit in the following assumption.

\begin{assumption}[Convexity of the Hamiltonian] \label{assume:hamiltonian_convexity}
We make the assumption that the Hamiltonian  $H(t, x, y, \theta, k, l)$ is convex in $\theta$, where the notion of convexity is defined in (3.10) of \cite{acciaio2019extended}. \dena{I think it would be better if we write the condition for our case and say that we assume this holds according to [1].}
\end{assumption}

We begin by writing out the Hamiltonian, as in (3.1) of \cite{acciaio2019extended}.

\begin{equation}
    H(t, x, y, \theta, k, l) = b(t, x,\theta, y) \cdot k + \sigma(t, x, \theta, y) \cdot l,
\end{equation}
where $b(t, x, \theta, y)$ represents the vector of drifts and $\sigma(t, x, \theta, y)$ represents the matrix of volatilities described in \eqref{eq:agent_recs_state},\eqref{eq:agent_capacity_state} and \eqref{eq:agent_val_fn_state}, for all $k \in \mcK$. This means the state space has dimension $3K$. Our goal is to find the controls ($\bm{Y_t}$ such that the derivative of the Hamiltionian is 0. This control will be specified in terms of the controlled state vector, as well as adjoint stochastic processes which solve a particular BSDE. Assuming sufficient convexity of the Hamiltonian with respect to state, control, and measures, this gives us the optimal control to the principal's problem (see Theorem 3.5 in \cite{acciaio2019extended}).

Denote $\bm{\overrightarrow{X}_t}$ as the vector $((X_t^\ikrep)_{k\in\mcK}, (A_t^\ikrep)_{k\in\mcK}, (V_t^\ikrep)_{k\in\mcK})^T$. The principal's Hamiltonian evaluated at the relevant processes is:

\begin{align}
    H(t, &\bm{\overrightarrow{X}_t}, \bm{Y_t}, \PP^{\bm{Y_t}}, K_t, L_t) = \nonumber \\
    \sum_{k \in \mcK}\biggl[&\bigl( h_t^k - \frac{Y_t^{X, \ikrep}}{\zeta^k} + \frac{-Y_t^{X, \ikrep} - S_t^{{\PP^{\bm{Y_t}}}}}{\gamma^k} + A_t^\ikrep\bigr) K_t^{X, \ikrep} - \frac{Y_t^{A, \ikrep}}{\beta^k} K_t^{A, \ikrep} + \nonumber\\
    & \bigl(-\tfrac{1}{2}\nu^k(Y_t^{X, \ikrep})^2 - \tfrac{1}{2\beta^k}(Y_t^{A, \ikrep})^2 + \tfrac{1}{2\gamma^k}(S_t^{\PP^{\bm{Y_t}}})^2\bigr) K_t^{V, \ikrep} + (L_t^{V, \ikrep})_k \sigma^k Y_t^{X, \ikrep}\biggr],
\end{align}
where $K_t = \begin{pmatrix} K_t^{X, (1)} \\ \vdots \\ K_t^{X, (K)} \\ K_t^{A, (1)} \\ \vdots \\ K_t^{A, (K)} \\ K_t^{V, (1)} \\ \vdots \\ K_t^{A, (K)}
\end{pmatrix}$, $L_t = \begin{pmatrix} (L_t^{X, (1)})_1 & \cdots & (L_t^{X, (1)})_K \\ \vdots \\ (L_t^{X, (K)})_1 & \cdots & (L_t^{X, (K)})_K  \\ (L_t^{A, (1)})_1 & \cdots & (L_t^{A, (1)})_K \\ \vdots \\ (L_t^{A, (K)})_1 & \cdots & (L_t^{A, (K)})_K  \\ (L_t^{V, (1)})_1 & \cdots & (L_t^{V, (1)})_K \\ \vdots \\ (L_t^{V, (K)})_1 & \cdots & (L_t^{V, (1)})_K 
\end{pmatrix}$ as a $3K \times K$ matrix.

We want to find the first order conditions of the Hamiltonian. That is, the values of $\bm{Y}$ for which (3.5) from \cite{acciaio2019extended} is solved. Following the template laid out in \cite{acciaio2019extended}, we obtain the following first order conditions. 
\begin{subequations}
\begin{align}
    -\upsilon^1 K_t^{X, (1)} - \upsilon^1 Y_t^{X, (1)} K_t^{V, (1)} + \sigma^1(L_t^{V, (1)})_1 + \tilde{\EE}\left[\frac{\eta^1}{\eta}\sum_{j \in \mcK} \frac{1}{\gamma^j}\left(\tilde{K}_t^{X, (j)} +  \tilde{K}_t^{V, (j)} \sum_{j\in\mcK} \frac{\eta^j}{\eta} \EE[Y_t^{X, (j)}]\right)\right] &=0 \\
    \vdots \\
    -\upsilon^K K_t^{X, (K)} - \upsilon^K Y_t^{X, (K)} K_t^{V, (K)} + \sigma^1(L_t^{V, (K)})_1 + \tilde{\EE}\left[\frac{\eta^K}{\eta}\sum_{j \in \mcK} \frac{1}{\gamma^j}\left(\tilde{K}_t^{X, (j)} +  \tilde{K}_t^{V, (j)} \sum_{j\in\mcK} \frac{\eta^j}{\eta} \EE[Y_t^{X, (j)}]\right)\right] &=0 \\ 
    \vdots \\ 
    \frac{1}{\beta^1}\left(-K_t^{A, (1)} - K_t^{V, (1)}Y_t^{A, (1)}\right) &= 0\\
    \vdots \\ 
    \frac{1}{\beta^K}\left(-K_t^{A, (K)} - K_t^{V, (K)}Y_t^{A, (K)}\right) &= 0\\.
\end{align}
\label{eq:FOCs}
\end{subequations}

We must also express the BSDE the adjoint processes must solve, as in (3.2) of \cite{acciaio2019extended}. These can be described by

\begin{equation}
    dK_t = \begin{pmatrix} 0 \\
    \vdots \\
    0 \\
    -K_t^{X, (1)} \\
    \vdots \\
    -K_t^{X, (K)} \\
    0 \\
    \vdots\\
    0
    \end{pmatrix}dt + L_t dW_t, \qquad K_T = \begin{pmatrix} -U^\prime_P\left(-\sum_{k\in\mcK} \pi_k \left[\lambda^kX_T^\ikrep + V_T^\ikrep\right]\right)\pi_1 \lambda^1\\
    \vdots \\
    -U^\prime_P\left(-\sum_{k\in\mcK} \pi_k \left[\lambda^k X_T^\ikrep + V_T^\ikrep\right]\right)\pi_K \lambda^K \\
    0 \\
    \vdots \\
    0 \\
    -U^\prime_P\left(-\sum_{k\in\mcK} \pi_k \left[\lambda^kX_T^\ikrep + V_T^\ikrep\right]\right)\pi_1 \\
    \vdots \\
    -U^\prime_P\left(-\sum_{k\in\mcK} \pi_k \left[\lambda^k X_T^\ikrep V_T^\ikrep\right]\right)\pi_K
    \end{pmatrix}.
    \label{eq:adjoint_BSDE}
\end{equation}

We can trivially solve \eqref{eq:adjoint_BSDE} and obtain the following:

\begin{align}
    L_t &= 0 \; \forall t \\
    K_t^{X, (j)} &= - \pi_j M_t \lambda^j \\
    K_t^{V, (j)} &= - \pi_j  M_t \\
    K_t^{A, (j)} &= - \pi_j M_t \lambda^j (T-t) \\
    M_t &:= \EE_t\left[U^\prime_P\left(-\sum_{k \in \mcK} \pi_k \left[\lambda^k X_T^\ikrep + V_T^\ikrep\right]\right)\right].
\end{align}

\dena{add more details on derivation of $K_t^A$?}By substituting these into the first order conditions described by \eqref{eq:FOCs}, we obtain the following optimal controls: \dena{The sign of the second term is still negative?! I think it should be positive. Moreover there should be no $\lambda$ in the denominator.}\note{addressed now; I must have not made the change I thought I did previously}

\begin{align}
    Y_t^{A, \ikrep} &= -\lambda^k(T-t) \label{eq:optimal_yA} \\
    Y_t^{X, \ikrep} &= -\lambda^k + \frac{\tilde{\EE}[\frac{\eta^k}{\eta} \tilde{M}_t]\left(\sum_{j \in \mcK} \eta^j \lambda^j +  \sum_{j \in \mcK} \eta^j \EE[Y_t^{X, (j)}]\ \right)}{\upsilon^k M_t \pi_k}. \label{eq:optimal_yX}
\end{align}

The latter represents a fixed point of $Y_t^{X, \ikrep}$, but we can see that it is trivially solved by $Y_t^{X, \ikrep} = -\lambda^k$. This ultimately implies :
\begin{align}
    S_t^{\PP^{\bm{Y_t}}} &= \sum_{j \in \mcK}\frac{\eta^j}{\eta} \EE[-Y_t^{X, (j)}] \\ &=  \sum_{j \in \mcK}\frac{\eta^j \lambda^j}{\eta} 
\end{align}


Under \Cref{assume:hamiltonian_convexity}, Theorem 3.5 of \cite{acciaio2019extended} gives us that the optimizer of the Hamiltonian corresponds to the optimal control of the principal's problem, which we have now attained. This convexity also tells us that the optimizer of the Hamiltonian is unique, and occurs when the FOCs are $0$ (that is, at the values in \eqref{eq:optimal_yA}-\eqref{eq:optimal_yX}. 

Due to \eqref{eqn:rec_pa_theory_FBSDE-full}, we also know that $C^k(X_T^\ikrep) = -\lambda^k X_T^k + \text{ const.}$ and in fact, the principal will choose the same penalty function \dena{It's not the same penalty function} for all sub-populations. 

From these results, we can comment on the interesting economic implications for these market-based emissions regulation systems, under our assumptions. A linear non-compliance penalty function implies a constant marginal value of obtaining an additional REC of $\lambda^k$ for a firm in sub-population $k$. This implies that the REC price is a weighted average of these $\lambda^k$'s as stated above. Through \eqref{eq:rec_pa_theory_optG} and \eqref{eq:rec_pa_theory_optGamma}, we can see that this further implies a constant excess REC generation and REC trading rate. In particular, firms will buy or sell a constant, pre-visible amount based on whether the value to them of an additional REC is higher or lower than the average of $\lambda^k$'s, weighted by $\frac{\eta^j}{\eta}$.

We can make similar inferences about capacity expansion. Recall that $Y_t^{A, \ikrep}$ can be interpreted as the expected marginal gain the agent accrues for an incremental increase in generation capacity. This does not depend on the cost of expansion $\beta^k$ and is independent of the sub-population that the agent belongs to. Hence, the amount that an individual agent gains from the acquisition of another REC, or through additional REC capacity is always constant. Of course, the agents' optimal expansion (described by \eqref{eq:rec_pa_theory_optAlpha}) does incorporate the cost of expansion that the agent faces. However, it is completely deterministic, for the same reason as detailed in the previous paragraph.

In that sense, these results suggest that the optimal REC market, regardless of the utility function of the principal, isn't a market at all! Rather, it is more akin to a tax and rebate system, where the rebate would be determined through $V_0^\ikrep$, and chosen such that the tax does not put undue strain on the regulated agents. Ultimately, this implies that the most economically efficient and ideal way to institute a market-based system is to impose a tax instead, which is a novel and unexpected result. 

Naturally, this analysis is incomplete, and does not incorporate political realities, or the ease with which one can calibrate each of these systems. These remain important areas to explore from a policy perspective in future work. Nonetheless, the optimality of a tax-like system instituted within the confines of a REC market is incredibly interesting and a novel result that we think may have important implications for how emissions regulation systems should be instituted going forward.





\section{Conclusion} \label{sec:conclusion}
In this work, we build off \cite{shrivats2020mean}, flipping the point of view to now incorporate the potential goals of a regulator in a REC market. We  focus on the principal agent problem where the principal aims to introduce a REC market to induce investment into some renewable energy generation. Their goals are to maximize revenue and said investment into renewable energy. 

Meanwhile, the agents each aim to navigate the market at minimum cost by modulating their planned excess REC generation, trading rate, and capacity expansion rate. By taking the infinite-player limit of the model, we can apply results from MFG theory to solve the agents' cost minimization problem. Given a solution to the agents' cost minimization problem we then consider how the principal may optimize the design of the REC market they institute for their own goals. \textcolor{black}{We impose a constraint of the principal having to ensure the average agent achieves a cost no worse than some exogenous reservation cost, to ensure that the market they choose is `fair' in some sense to the agents.}\note{not sure sentence in red is needed, as we dont do a lot with the reservation cost} 

We find the optimal penalty function for the principal to impose under these conditions, utilizing tools from the nascent field of PA-MFGs. In particular, we restrict ourselves to considering penalty functions for which the MFG the agents experience have a solution. The methods used in this process marry the ideas of \cite{sannikov2008continuous}, \cite{cvitanic2018dynamic}, and \cite{elie2019tale}, with the extended McKean-Vlasov control techniques espoused by \cite{acciaio2019extended}. Ultimately, we are able to find an optimal penalty function, under the critical assumption detailed in \Cref{assume:hamiltonian_convexity} - that the principal's Hamiltonian is convex with respect to its arguments, as well as the law of the states and controls, with the notion of convexity for the latter as defined in \cite{acciaio2019extended}.

In particular, we find that the optimal penalty function is linear in the terminal RECs of the agents (plus some constant). This implies a constant marginal benefit to REC acquisition, and therefore fixes the REC price at a deterministic constant. This is an interesting and unintuitive result that would have fascinating implications for the design of REC markets generally, as it implies that the optimal REC market (under these conditions and assumptions) is more akin to a tax than a market. 

However, there are notable areas for improvement and further work. The most obvious would be to prove \Cref{assume:hamiltonian_convexity}. However, introducing a common noise for the regulated agents would also add to the realism of the model, as would an analysis from a policy perspective to better understand the tradeoffs in emissions markets or taxes which are not explicitly included in our model formulation. Perhaps the most important addition to make would be to extend this model to capture a multi-period REC market.

Other possible extensions include using deep learning methods to numerically solve for the PA game (as in \cite{campbell2021deep} or \cite{aurell2020optimal}). 

Nonetheless, in providing the mathematical framework contained in this paper, we have  produced a valid extension of a coherent structure under which REC markets can be studied holistically, including from the perspective of a regulator. In particular, this perspective while incorporating the mean field aspect of agent behaviour represented a gap that was not yet filled in this area. We feel this is potentially of great use to regulatory bodies in these systems, and can be a springboard to further studies and analysis in the future.

\bibliographystyle{siamplain}
\bibliography{references}
\end{document}